\newcommand{\pre}[1]{^\bullet #1}
\newcommand{\post}[1]{#1 ^\bullet}
\newcommand{\conc}{\textbf{co}}
\newcommand{\cnfo}{\mathrel{\natural}}
\DeclareMathOperator{\unf}{\textsc{unf}}
\DeclareMathOperator{\cgs}{\textsc{cgs}}
\title{Asynchronous Games on Petri Nets and ATL}
\author{Federica Adobbati \and Luca Bernardinello \and Lucia Pomello}
\institute{Dipartimento di informatica, sistemistica e
             comunicazione\\
             Università degli studi di Milano - Bicocca\\
             viale Sarca 336 U14\\
             Milano, Italia}
\begin{document}
\maketitle
%
%Cose da aggiustare forse:
% le strategie sono definite sull'unfolding, cambiare dove c'è scritto rete
% le reti sono (P, T, F, m_0), ma lo stato iniziale è anche indicato con q_0 in certi punti
%
\begin{abstract}
We define a game on distributed Petri nets, where several players
interact with each other, and with an environment. The players,
or users, have perfect knowledge of the current state, and pursue
a common goal.
Such goal is expressed by Alternating-time
Temporal Logic (ATL). The users have a winning strategy if they can cooperate to reach their goal, no matter how the environment behaves. We show
that such a game can be translated into a game on concurrent game
structures (introduced in order to give a semantics to ATL).
We compare our game with the game 
on concurrent game structures and discuss the differences 
between the two approaches. 
Finally, we show that, when we consider memoryless strategies 
and a fragment of ATL, 
we can construct a concurrent game structure from the Petri net, 
such that an ATL formula is verified on the net if, and only if, 
it is verified on the game structure.
\end{abstract}
\section{Introduction}
We describe the interaction between a group of \emph{users} and an 
\emph{environment} through a game defined on distributed Petri nets
\cite{BD11}, \cite{GG13}, and played on their unfoldings. 
We assume that the users have full observability on the system, i.e.: they have perfect information on the structure of the system and of its current state.
The users have a common goal, and each of them can control the 
occurrence of a subset of transitions. 
Some of the transitions belong to the environment, and are 
uncontrollable by the users. The environment is indifferent to 
the users' goal.
The users have a winning strategy if they can cooperate to reach 
their goal, no matter how the environment behaves. 

We express the goal of the users with Alternating-time Temporal Logic (ATL). 
This logic was first defined in \cite{AHK02} on 
concurrent game structures. We compare our game with the game 
on concurrent game structures and discuss the differences 
between the two approaches. 

Finally, we show that, when we consider memoryless strategies 
and 
ATL without the \emph{next} operator, 
we can construct a concurrent game structure from the Petri net, 
such that an ATL formula is verified on the net if, and only if, 
it is verified on the game structure.

%%%
%%  This result should open ... 
%frase sulle prospettive aperte da questo lavoro
The paper is structured as follows. 
In the next section, we recall the basic notions about Petri nets: elementary net systems, and their unfoldings, as well as distributed net systems.
In Sect.~\ref{s:asyncgame} we define the game on the unfolding 
of distributed net systems, and
the notions of strategy for a user and of winning strategy for a set of cooperating users.
In Sect.~\ref{sec:atlpn} we propose 
to specify the winning condition by ATL.
After the syntax of ATL formulas,
we define their semantics on the unfolding. 
In Sec.~\ref{sec:reduction} we present the reduction of games on distributed nets into concurrent game structures; namely,
in Sec.~\ref{s:tbags} we recall the definition of turn-based asynchronous game structures, a special case of concurrent game structures, and provide an intuition of ATL semantics on these structures.
In Sec.~\ref{sec:const} we show how to derive a turn-based asynchronous game structure from a Petri net game. 
In Sec.~\ref{sec:relplay} we prove the strict relationship between the plays on the unfolding and the infinite fair computations on the turn-based asynchronous game structure.
The main result is in Sec.~\ref{sec:relstrat}, where we show that a set of memoryless winning strategies exists on the unfolding of a distributed net system for an ATL formula, without next operator, if and only if there exists a set of memoryless winning strategies on the associated game structure, for the same formula. Moreover, in the case of full memory strategies, we prove that if there are winning strategies on the net system, the same holds on the game structure.
Sec.~\ref{sec:last} concludes the paper with a discussion on related works and on some critical points,
indicating possible developments of this work. 
\section{Petri nets}
Petri nets were introduced by C.A. Petri as a formal tool to represent
flows of information in distributed systems. 
In the last decades, several classes of nets have been defined and studied. 
In this work we use the class of \emph{elementary} nets, as defined 
in \cite{RE96}.
\begin{definition} %net%
A \emph{net} is a triple $N=(P, T, F)$, where $P$ and $T$  are disjoint sets.
The elements of $P$ are called \emph{places} and are represented by circles, 
the elements of $T$ are called \emph{transitions} and are represented by squares.
$F$ is called \emph{flow relation}, with
$F \subseteq (P \times T )\cup (T \times P)$, and is represented by arcs.
\end{definition}
%una rete è finita se P e T sono finiti 
%
For each element of the net $x \in P \cup T$, the \emph{pre-set} of $x$
is the set $\pre{x} = \{ y \in P \cup T \mid (y,x) \in F \}$, 
the \emph{post-set} of $x$ is the set
$\post{x} = \{ y \in P \cup T \mid (x,y) \in F\}$.

We assume that
each transition has non-empty pre-set and post-set: 
$\forall t \in T$, $\pre{t} \neq \emptyset$ and
$\post{t} \neq \emptyset$.

Two transitions, $t_1$ and $t_2$, are \emph{independent}  if
$(\pre{t_1} \cup \post{t_1})$ and $(\pre{t_2} \cup \post{t_2})$
are disjoint. They are in conflict, denoted with $t_1\# t_2$, if $\pre{t_1} \cap \pre{t_2} \neq \emptyset$.

A net $N'= ( P', T', F')$ is a \emph{subnet} of $N = (P, T, F)$ if 
$P' \subseteq P$, $T', \subseteq T$, and 
$F'$ is $F$ restricted to the elements in $N'$.

\begin{definition} %elementary net system
	An \emph{elementary net system} is a quadruple
	$\Sigma = (P, T, F, m_0)$
	consisting of a finite net $N = (P,T,F)$ and an
	\emph{initial marking}  $m_0\subseteq P$.
	A \emph{marking} is a subset of $P$ and represents a global state.
\end{definition}
A transition $t$ is \emph{enabled} at a marking $m$, denoted $m[t\rangle$,
if    $\pre{t} \subseteq m \land\ \post{t} \cap m = \emptyset$.
 
A transition $t$, enabled at $m$, can  \emph{occur} (or \emph{fire}) 
producing a new marking
$m' = t^\bullet \cup (m \setminus {}^\bullet t)$, denoted $m[t\rangle m'$.
%\[
%    m'(p) =
%      \begin{cases}
%        m(p) + 1 & \text{if } p \in t^\bullet \setminus {}^\bullet t\\
%        m(p) - 1 & \text{if } p \in {}^\bullet t \setminus %t^\bullet\\
%        m(p)     & \text{otherwise}
%      \end{cases}
%\]
%
A marking $m'$ is reachable from another marking $m$, 
if there is a sequence $ t_1 t_2 \dots t_n$ such that 
$m[t_1 \rangle m_1 [ t_2 \rangle \dots m_{n-1}[ t_n \rangle m'$;
in this case, we write $m' \in [m \rangle$. 
The set of reachable markings
is the set of markings reachable from the initial marking $m_0$,
denoted $[m_0\rangle$.

%A net system is \emph{1-safe} if, for each reachable marking $m$,
%and each place $p$, $m(p) \leq 1$. Markings in 1-safe net systems
%can, and will be, considered as subsets of places.

In a net system, two transitions, $t_1$ and $t_2$, are
\emph{concurrent} at a marking $m$
if they are independent and both enabled at $m$.

An elementary net system is \emph{contact-free} iff $\forall m \in [m_0\rangle, \forall t \in T$ if $\pre{t} \subseteq m$ then  $ \post{t} \cap m = \emptyset$. 
In this paper we consider only contact-free elementary net systems. 

\vspace{\baselineskip}
% -------------------------------------------------- %
% Reti di occorrenze, branching process,
% unfolding, run.
% -------------------------------------------------- %
\noindent
The non sequential behaviour of contact-free elementary net systems can be recorded 
by occurrence nets, which are used to represent by a single object 
the set of potential histories of a net system.
In the following, by $F^*$ we denote the reflexive and transitive closure of $F$. 

Given two elements $x, y \in P \cup T$ we write $x \cnfo y$, if there exist
$t_1, t_2 \in T : t_1 \neq t_2, t_1 F^* x, t_2 F^* y \land
\exists p \in {}^{\bullet}t_1 \cap {}^{\bullet}t_2$.
\begin{definition} %occurrence  net 
A net $N =(B, E, F)$ is an \emph{occurrence net} if
\begin{itemize}
	\item for all $b \in B$, $|\pre{b}| \leq 1$
	\item $F^*$ is a partial order on $B \cup E$
	\item for all $x \in B \cup E$, the set
	$\{ y \in B \cup E \mid y F^* x \}$ is finite
	\item for all $x \in B \cup E$, $x \cnfo x$ does not hold
\end{itemize}
\end{definition}
We will say that two elements $x$ and $y$, $x \neq y$,
of $N$ are \emph{concurrent},
and write $x\, \conc\, y$, if they are not ordered by $F^*$,
and $x \cnfo y$ does not hold.

By $\min(N)$ we will denote the set of minimal elements with
respect to the partial order induced by $F^*$.

A \emph{B-cut} of $N$ is a maximal set of pairwise concurrent
elements of $B$. B-cuts represent potential global states
through which a process can go in a history of the system. 
By analogy with net systems, we will sometimes
say that an event $e$ of an occurrence net is \emph{enabled}
at a B-cut $\gamma$, denoted $\gamma[e\rangle$, if $\pre{e} \subseteq \gamma$.
We will denote by $\gamma + e$ the B-cut
$(\gamma\backslash ^\bullet e) \cup e^\bullet$.
A B-cut is a \emph{deadlock cut} if no event is enabled at it.

%%%%
%  increasing sequence B-cuts, e<gamma
%%%%%
Let $\Gamma$ be the set of B-cuts of  $N$.
A partial order on $\Gamma$ can be defined as follows: let
$\gamma _1, \gamma _2$ be two B-cuts. We say
$\gamma _1 < \gamma _2$ iff
\begin{enumerate}
	\item $\forall y \in \gamma _2 \exists x \in \gamma _1: xF^*y$
	\item $\forall x \in \gamma _1 \exists y \in \gamma _2: xF^*y$
	\item $\exists x\in \gamma _1$,$\exists y \in \gamma _2 : xF^+y$
\end{enumerate}
In words, $\gamma_1 < \gamma_2$ if any condition
in the second B-cut is or follows a condition of the first B-cut and
any condition in the first B-cut is or comes before a condition of the
second B-cut (and there exists at least one condition coming before).

A sequence of B-cuts, $\gamma _0 \gamma _1 \dots \gamma _i \dots$
is \emph{increasing} if $\gamma_i < \gamma_{i+1}$ for all $i \geq 0$.
A cut $\gamma$ is \emph{compatible} with an increasing 
sequence of B-cuts $\delta$ iff there are two cuts 
$\gamma_i, \gamma_{i+1}\in \delta$ such that $\gamma_i\leq \gamma \leq \gamma_{i+1}$. 

Given an increasing sequence of B-cuts $\delta$, we define a \emph{refinement} 
$\delta'$ of $\delta$ as an increasing sequence of B-cuts such that for each 
$\gamma$ B-cut in $\delta$, $\gamma$ is also a B-cut in $\delta'$. 
A \emph{maximal} refinement $\delta'$ is an increasing sequence of 
cuts such that there is no $\gamma\not\in\delta'$ compatible with 
$\delta'$.

We say that an event $e \in E$ precedes a B-cut $\gamma$,
and write $e < \gamma$,
iff there is $y \in \gamma$ such that $eF^+y$.
In this case, each element of $\gamma$ either follows $e$ or is
concurrent with $e$ in the partial order induced by the occurrence net. 

The next definitions are adapted from~\cite{Ebp91}. 
\begin{definition} %branching process
A \emph{branching process} of a contact-free elementary net system $\Sigma = (P, T, F, m_0)$
is an occurrence net $N = (B, E, F)$, together with a
labelling function $\mu: B \cup E \rightarrow P \cup T$,
such that
\begin{itemize}
	\item $\mu(B) \subseteq P$ and $\mu(E) \subseteq T$
	\item for all $e \in E$, the restriction of $\mu$ to $\pre{e}$
	is a bijection between $\pre{e}$ and $\pre{\mu(e)}$;
	the same holds for $\post{e}$
	\item the restriction of $\mu$ to $\min(N)$ is a bijection
	between $\min(N)$ and $m_0$
	\item for all $e_1, e_2 \in E$, if $\pre{e_1} ={}\pre{e_2}$ and
	$\mu(e_1) = \mu(e_2)$, then $e_1 = e_2$
\end{itemize}
A \emph{run} of $\Sigma$ is a branching process $(N, \mu)$
such that there is no pair of elements $x, y$ in $N$ such that 
$x \cnfo y$.
\end{definition}
For $\gamma$ a B-cut of $N$, the set $\{\mu(b) \mid b \in \gamma\}$
is a reachable marking of $\Sigma$ (\cite{Ebp91}), and we refer to it as the
marking corresponding to $\gamma$.

Let $(N_1, \mu_1)$ and $(N_2, \mu_2)$ be two branching processes of $\Sigma$, 
where $N_i = (B_i, E_i, F_i)$, $i = 1,2$.
We say that $(N_1, \mu_1)$ is a \emph{prefix}
of $(N_2, \mu_2)$ if $N_1$ is a subnet of $N_2$, and 
$\mu_2|_{B_1\cup E_1} =  \mu_1$. 
%\begin{itemize}
%	\item $\min(N_1) = \min(N_2)$
%	\item if $b \in B_1$ and $(e,b) \in F_2$, then $e \in E_1$
%	\item if $e \in E_1$, and $b$ is either a precondition or
%	a postcondition of $e$ in $N_2$, then $b \in B_1$
%\end{itemize}
%
For any contact-free elementary net system $\Sigma$, there exists a unique,
up to isomorphism, maximal branching process of $\Sigma$.
We will call it the \emph{unfolding} of $\Sigma$,
and denote it by $\unf(\Sigma)$ .

A \emph{run} of $\Sigma$ describes a particular
history of $\Sigma$, in which conflicts have been solved.
Any run of $\Sigma$ is a prefix of the unfolding $\unf(\Sigma)$;
we also say that it is a run on $\unf(\Sigma)$.

In this paper we are interested in Petri nets modelling systems
in which several \emph{users} interact with one another, and
with an \emph{environment}. 
Each user controls a subset of transitions, deciding whether to 
fire them or not when they are enabled.

We also assume that choices among transitions are local; this 
means that every choice is completely determined either by the 
environment or by one of the users.

As a formal setting, we refer to the so-called \emph{distributed net systems},
as introduced and studied in \cite{BD11} and in \cite{GG13}.
\begin{definition} % distributed system
  A \emph{distributed net system} over a set $L$ of locations
  is an elementary net system $\Sigma = (P, T, F, m_0)$ together with a map
  \[
    \alpha: (P \cup T) \rightarrow L
  \]
  such that for every $p\in P$, $t\in T$, if $p\in\, ^\bullet t$,
  then $\alpha(p) = \alpha(t)$.
\end{definition}
From now on, we will equivalently denote a distributed net system as 
a pair $\langle \Sigma, \alpha \rangle$ or with $\Sigma = (P, T, F, m_0, \alpha)$.

Let $\langle \Sigma, \alpha \rangle$ be a distributed net system, 
we associate every location with one of the agents interacting 
on $\Sigma$. Specifically, if we are considering a system with 
$k$ users, then
$L = \{ \mathrm{env}, u_1, ..., u_k \}$, 
i.e. the distributed net system has $k+1$ locations, 
representing the environment (\emph{env}) and the
$k$ users (\emph{$u_i$}, $i\in\{1,..,k\}$); 
we denote with $T_{u_i}$ the subset of transitions belonging to 
location $u_i$.
We assume that each user controls all transitions in its location.

The notions of unfolding and run apply in the obvious way to
distributed net systems. We will use
$E_{u_i}$ to denote the set of events in the unfolding 
controllable by user $u_i$
(occurrences of transitions that belong to location $u_i$), 
and $E_{nc} = E \setminus{} \cup_{i\in\{1,..,k\}} E_{u_i}$ to denote
uncontrollable events. 
Uncontrollable transitions are meant to represent actions performed
by the \emph{environment}.
\begin{example}\label{e:dcens_unf}
The net in Fig. \ref{fig:net3} is a distributed net system 
with three locations, where two users (represented with different 
tones of grey) interact with an environment (in white in the 
picture). Fig. \ref{fig:unfnet3} represents its unfolding.
\end{example}
\begin{figure}
    \centering
    \includegraphics[width = 0.65\textwidth]{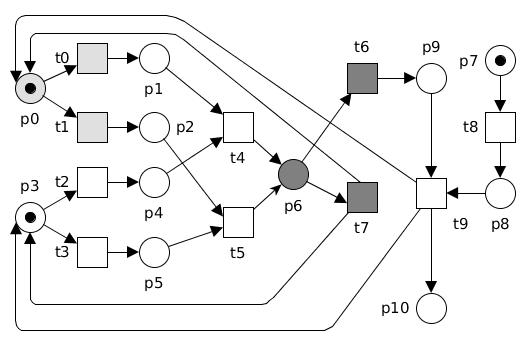}
    \caption{Net with three locations}
    \label{fig:net3}
\end{figure}
\begin{figure}
    \centering
    \includegraphics[width = 0.75\textwidth]{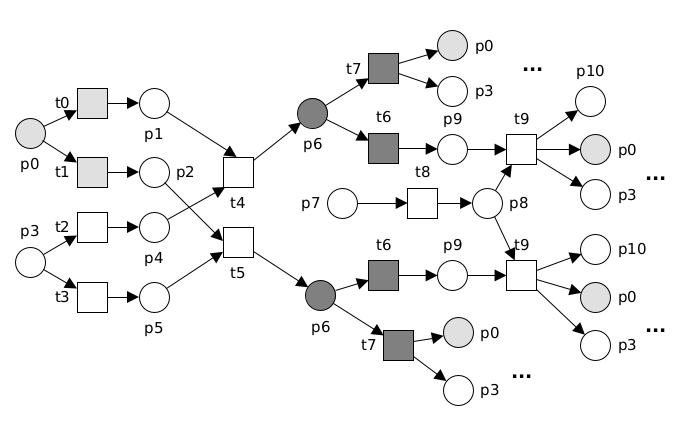}
    \caption{Unfolding of the net in Fig. \ref{fig:net3}}
    \label{fig:unfnet3}
\end{figure}
% ======================================================================== %
\section{An asynchronous game played on unfoldings}\label{s:asyncgame}
Let $\Sigma = (P, T, F, m_0, \alpha)$ be a
distributed net system.
We define a game on $\unf(\Sigma)$, adapting some of the ideas 
introduced in \cite{BPPV18} and \cite{ABP19}. 
\begin{definition}
\label{def:play}
Let $\rho = (B_\rho, E_\rho, F_\rho, \mu_\rho)$ be a run on
$\unf(\Sigma$) and $\delta = \gamma_0,\gamma_1,\cdots,$ $\gamma_i,\cdots$
an increasing sequence of B-cuts.
The pair $(\rho, \delta)$ is a \emph{play} iff:
\begin{enumerate}
  \item for each uncontrollable event $e$ in $\unf(\Sigma)$, the net obtained by adding
        $e$ and its postconditions to $\rho$ is not a run of $\unf(\Sigma)$;
  \item if $\rho$ is finite, for each event $e$ in the unfolding controllable by 
  some user, the net obtained by adding
        $e$ and its postconditions to $\rho$ is not a run of $\unf(\Sigma)$;
  \item $\forall e\in E_\rho$ there is a B-cut  $\gamma_i\in\delta$
        such that $e<\gamma_i$. 
%    \item For every pair of sequential cuts 
%    $\gamma_i, \gamma_{i+1}$ in the play, for every 
%    event $e_c$ controllable by some user and 
%    such that $\gamma_i < e_c < \gamma_{i+1}$, 
%    then $^\bullet e_c\subseteq \gamma_i$, while 
%    $e_c^\bullet\subseteq \gamma_{i+1}$.
\end{enumerate}
\end{definition}
In other words, a play is a run, weakly fair with respect to uncontrollable
transitions, together
with an increasing sequence of B-cuts, which can be seen as a potential
record of the play as observed by an external entity.
In a play, the users have weaker fairness constraints than the 
environment: they are not forced to fire any enabled event, 
if some uncontrollable event is enabled; if the only enabled 
events are controllable by some users, then one of them has to 
fire an event, i.e. a play can be finite only if it ends in a 
deadlock state.
An example of play on the unfolding in Fig. 
\ref{fig:unfnet3} is represented in Fig. \ref{fig:play},
where thick lines show the B-cuts.
\begin{figure}
    \centering
    \includegraphics[width=0.75\textwidth]{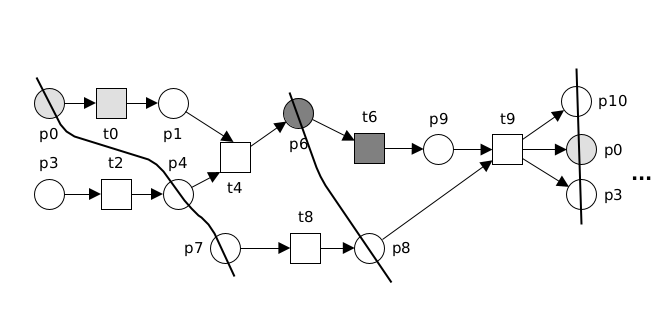}
    \caption{A play on the unfolding}
    \label{fig:play}
\end{figure}
For every pair of cuts $\gamma_i, \gamma_{i+1}$ in the 
play, there may be many events in between, 
that are concurrent or sequential with each other. 

The \emph{winning condition} for a team of users is a set of plays. 
We are particularly interested in the case in which the winning 
plays satisfy a certain property, such as the reachability or 
the avoidance of a target place. 

The behaviour of each user during the play can be determined 
by a \emph{strategy}.
\begin{definition}
\label{def:strategy}
Let $\Gamma$ be the set of B-cuts on the unfolding and $T_u$ be 
the set of transitions controllable by user $u$. 
A \emph{strategy} for a user $u$ is a function 
$f_u:\Gamma\rightarrow 2^{T_u}$ 
such that for every $t\in T_u$ and, for every $\gamma\in \Gamma$, 
if $t\in f_u(\gamma)$, then there is an event $e$ enabled in 
$\gamma$ and such that $\mu(e) = t$.
\end{definition}
A strategy $f_u$ is \emph{memoryless} if for every pair of cuts 
$\gamma_1, \gamma_2$, if $\mu(\gamma_1) = \mu(\gamma_2)$, then
$f_u(\gamma_1) = f_u(\gamma_2)$. 
In this case, we can equivalently define a strategy as a 
function $f_u:Q \rightarrow 2^{T_u}$, where $Q$ is the set of 
reachable markings of the net.

Whenever an event $e\in E_u$ occurs, we say that 
the user $u$ made a \emph{move} in the game. 
\begin{definition}
A user $u$ is \emph{finally postponed} 
in a play $(\rho, \delta)$, 
if there is a cut $\gamma$ in $\delta$ such that 
$f_u(\gamma_j)\neq\emptyset$ for every $\gamma_j\geq \gamma$ compatible with $\delta$, and $u$ did not make any move after 
$\gamma$. 
\end{definition}
From the set of all the plays, we are interested only in those 
in which the users follow their strategy.
\begin{definition}
Let $S\subseteq \{u_1, ..., u_k\}$ be a set of users.
A play $(\rho, \delta)$ is \emph{consistent with a set of 
strategies} $F_S$ iff: 
\begin{enumerate}
    \item For every user $u\in S$, for every event $e\in E_u\cap E_\rho$, 
    there is a cut $\gamma_i\in\delta$ such that 
    $\mu(e)\in f_u(\gamma_i)$, $f_u\in F_S$, and $e$ is the only 
    event between $\gamma_i$ and $\gamma_{i+1}$;
    \item There is no user $u\in S$ finally postponed.
\end{enumerate}
\label{def:cplay}
\end{definition}
A set of strategies $F_S$ is \emph{winning} for the users in 
$S$ iff there is at least a play consistent with $F_S$, and 
the users win every play that is consistent with $F_S$, 
whatever the other agents behave.

We will restrict our attention to the case where all the users 
cooperate to reach the same goal against the environment.
% ===================================================================== %
\section{Logical specification of the users' goal}
\label{sec:atlpn}
In Sect.~\ref{s:asyncgame} we have defined the winning condition
as a set of plays. Such a set can be specified in different ways.
In this section, we propose to use Alternating-time Temporal Logic
(ATL), introduced in~\cite{AHK02}.

ATL was introduced as a generalization of CTL, with a more flexible set of path
quantifiers. 
In \cite{AHK02}, ATL is interpreted over \emph{concurrent game 
structures} ($\cgs$), a generalization of Kripke models;
we will formally introduce those structures in Sect.~\ref{s:tbags}.
Intuitively, a $\cgs$ models a system where several \emph{players}
interact. Quantifiers allow to specify that a formula holds if
a subset of players has a strategy which guarantees to reach
a given goal.
In this section we will define a semantics of ATL on the  
unfoldings of distributed net systems.

% We express the goals of the users using the syntax introduced 
% in \cite{AHK02} for Alternating-time Temporal Logic (ATL). 
The elements that characterize an ATL formula are the following: 
a set $P$ of elementary propositions; a finite set of players; 
the symbols $\lor$ and $\lnot$ interpreted in the usual manner; 
path quantifiers $\langle\langle A \rangle\rangle$, where $A$ is
a subset of players, and some temporal operators.

Similarly to LTL and CTL, in ATL the temporal operators are $X$ (next), 
$G$ (always) and $U$ (until). 
In addition to these, we will use the operator $F$ (eventually),
which can be derived from the `until' operator:
for every ATL formula $\phi$, 
$\langle\langle A\rangle\rangle F \phi$ is equivalent to 
$\langle\langle A\rangle\rangle U(\textit{true}, \phi)$.

The set $\Phi$ of ATL formulas is defined as follows:
Let $p \in P$, $\phi_1, \phi_2 \in \Phi$.
\begin{enumerate}
    \item $p$ is an ATL formula;
    \item $\lnot \phi_1$, $\phi_1 \lor \phi_2$ are ATL formulas;
    \item $\langle\langle A\rangle\rangle X \phi_1$,
    $\langle\langle A\rangle\rangle G \phi_1$, 
    $\langle\langle A\rangle\rangle U(\phi_1, \phi_2)$ are ATL formulas, where $A$ is a set of players. 
\end{enumerate}
Let $\Sigma = (P, T, F, m_0, \alpha)$ be a distributed net system.
Let $N =\unf(\Sigma) = (B, E, F, \mu)$ be its unfolding, and $\gamma$ a B-cut of $N$. 

We first define the validity of a formula in a B-cut, and
write
\[
    \gamma\models\phi
\]
to denote that formula $\phi$ holds (or is satisfied) in cut $\gamma$.

%cambiare f e g con \phi_1, \phi_2
Let $\phi_1$ and $\phi_2$ be two ATL formulas and $p\in P$. 
The formula $p$ holds in a B-cut $\gamma$ iff there is a 
$b$ in $\gamma$ such that $\mu(b) = p$.

The formula $\lnot \phi_1$ is true in $\gamma$ iff $\phi_1$ is false in 
$\gamma$;
$\phi_1 \lor \phi_2$ is true in $\gamma$ iff $\phi_1$ is true in $\gamma$ or $\phi_2$ 
is true in $\gamma$. 

Consider now $S\subseteq \{u_1, ..., u_k\}$:
\begin{itemize}
    \item $\langle\langle S \rangle\rangle X \phi_1$ is satisfied 
    in a B-cut $\gamma$ iff there exists a set $F_S$ of 
    strategies, one for each player in $S$, such that for 
    every event $e$ that can occur in $\gamma$, consistent 
    with the strategies, the cut $\gamma + e$ satisfies $\phi_1$.
    \item $\langle\langle S\rangle\rangle G \phi_1$ is satisfied 
    in a B-cut $\gamma$ iff there exists a set $F_S$ of 
    strategies, one for each player in $S$, such that for every 
    play $(\rho', \delta')$ that starts in $\gamma$, consistent 
    with the strategies, in every cut $\gamma'$, compatible 
    with $\delta'$, $\phi_1$ holds.
    \item $\langle\langle S\rangle\rangle U(\phi_2, \phi_1)$ is satisfied 
    in a B-cut $\gamma$ iff there exists a set $F_S$ of 
    strategies, one for each player in $S$, such that for every 
    play $(\rho', \delta')$ that starts in $\gamma$,
    consistent with the strategies:
    \begin{itemize}
        \item  there is a cut $\gamma'$ compatible with 
          $\delta'$ such that $\phi_1$ is satisfied in $\gamma'$;
        \item for every B-cut $\gamma''< \gamma'$ compatible with 
          $\delta'$ , $\phi_2$ is satisfied.
    \end{itemize}
\end{itemize}
An ATL formula is satisfied on $N$ iff it is satisfied in the initial
cut $\gamma_0$ of $N$.
If a formula $\phi$ is satisfied on $N$, we write
\[
    N \models \phi
\]
For an ATL formula in the form $\langle\langle S \rangle\rangle \eta$, 
given a set of strategies $F_S$, we can define a 
\emph{winning play} for the players in $S$. 
Let $(\rho, \delta)$ be a play on the unfolding, consistent with $F_S$.  
\begin{itemize}
    \item  Given an ATL formula $\langle\langle S\rangle\rangle X \phi_1$, 
    $(\rho, \delta)$ is a winning play 
    iff for every cut $\gamma$ reachable from $\gamma_0$ after the 
    occurrence of a single event and compatible with $\delta$, 
    $\gamma$ satisfies $\phi_1$.
    \item Given an ATL formula $\langle\langle S\rangle\rangle G \phi_1$, 
    $(\rho, \delta)$ is a winning play  
    iff every cut compatible with $\delta$ satisfies $\phi_1$.
    \item Given an ATL formula $\langle\langle S\rangle\rangle U(\phi_1, \phi_2)$, 
    $(\rho, \delta)$ is a winning play  
    iff for every $\delta'$ refinement of $\delta$ there is a cut 
    $\gamma$ compatible with $\delta'$ in which $\phi_2$ is satisfied, 
    and in all the cuts $\gamma' < \gamma$ such that $\gamma'$ is 
    compatible with $\delta'$, $\phi_1$ is satisfied.
\end{itemize}
\begin{example}
Consider the net in Fig. \ref{fig:net3} and its 
unfolding in Fig. \ref{fig:unfnet3}. We want to verify 
whether the two users controlling the gray components  
can force the system to reach $p_6$ infinitely often. 
We identify with $u_1$ the user on the light gray component 
and with $u_2$ the user on the dark gray. 
We can express this goal with the formula:
\begin{equation}
\label{eq:reach}
    \langle\langle u_1,u_2\rangle\rangle G \langle\langle u_1,u_2 \rangle\rangle F p_6.
\end{equation}
This formula is satisfied on the unfolding. 
To see that, consider the strategy $f_{u_1}: \Gamma \rightarrow 2^{T_{u_1}}$ 
defined in this way: for every cut $\gamma$ such that 
$\mu(\gamma) = \{p_0, p_4, p_8\}$, $f_{u_1}(\gamma) = \{t_0\}$, 
for every cut $\gamma$ such that 
$\mu(\gamma) = \{p_0, p_5, p_8\}$, $f_{u_1}(\gamma) = \{t_1\}$, 
otherwise $f_{u_1}(\gamma) = \emptyset$; 
and the strategy $f_{u_2}$ defined as $f_{u_2}(\gamma) = \{t_6, t_7\}$ if $\mu(\gamma) = \{p_6, p_8\}$, $f_{u_2}(\gamma) = 
\{t_7\}$ if $\mu(\gamma) = \{p_6, p_{10}\}$, $f_{u_2}(\gamma) = \emptyset$ 
otherwise. 
$F_{u_1, u_2} = \{f_{u_2}, f_{u_1}\}$ is a set of winning strategies 
for the users. 
Indeed, the users have a strategy to reach always cuts 
in which they have a strategy to reach a place labelled as 
$p_6$. 

We could also consider a safety goal, for example we could 
ask that the play never reaches a state in which $p_{10}$ is 
true. This can be expressed with the formula:
\begin{equation}
\label{eq:safe}
    \langle\langle u_1, u_2\rangle\rangle G \lnot p_{10}.
\end{equation}
Since the users must respect a fairness constraint, even if 
this is weaker than the one imposed to the environment, 
they cannot just decide not to fire any transition and 
avoid in this way the unsafe place. 
Nevertheless, they have winning strategies: 
$f'_{u_1}(\gamma) = \{t_1, t_2\}$ if $\mu(\gamma) \in
\{\{p_0, p_4, p_8\}, \{p_0, p_5, p_8\}\}$, 
$f'_{u_1}(\gamma ) = \emptyset$ otherwise; 
$f'_{u_2}(\gamma) = \{t_7\}$ if $\mu(\gamma) = \{p_6, p_8\}$, 
$f'_{u_2}(\gamma) = \emptyset$ otherwise. 
\end{example}
\section{Reduction of games on Petri nets to concurrent game structures}
\label{sec:reduction}
In \cite{AHK02}, the authors present many model checking 
algorithms to decide whether an ATL formula is satisfied on a 
concurrent game structure. In many cases those algorithms can 
be exploited also for the game on Petri nets. 
In particular, we show that when we consider memoryless strategies 
and ATL formulas without the \emph{next (X)} operator, we can 
reduce a game on a net into a turn-based asynchronous game structure 
with fairness constraints.
\subsection{Turn-based asynchronous game structures}
\label{s:tbags}
We first recall from \cite{AHK02} the definition of turn-based asynchronous
game structure, which is a special case of a concurrent
game structure.
\begin{definition}
A \emph{turn-based asynchronous game structure} is a tuple $G = \langle n, Q, \Pi, w, d, \tau\rangle$, where
\begin{itemize}
    \item $n\geq 2$ is the number of \emph{players}. Every player is identified with a number 1,..., n. Player $n$ 
    represents the \emph{scheduler}. At every turn, the 
    scheduler selects one of the other players. 
    \item $Q$ is a finite set of \emph{states}.
    \item $\Pi$ is a set of \emph{propositions}.
    \item $\forall q\in Q$, $w(q)\subseteq \Pi$ is the set of 
    propositions that are true in $q$.
    \item $\forall a\in \{1,..., n\}, q\in Q$, 
    $d_a(q)\in \mathbf{N}$ is the number of moves 
    available for the player $a$ in the state $q$. 
    Every move is identified with a number $1,..., d_a(q)$. 
    For every state $q\in Q$, $d_n(q) = n-1$. 
    \item  For every state $q\in Q$, $D(q)$ is the set 
    $\{1,..., d_1(q)\}\times ... \times\{1,..., d_n(q)\}$ 
    of \emph{move vectors}. 
    \item $\tau$ is the \emph{transition function}. 
    For every $q\in Q$ and $\langle j_1, ..., j_n\rangle$ 
    vector move, $\tau(q, j_1, ..., j_n)\in Q$ is 
    the state where the system is if from the state 
    $q$, every player $a\in\{1, ..., n\}$ chooses move $j_a$. 
    For all move vectors $\langle j_1,..., j_n \rangle$, 
    $\langle j'_1,..., j'_n \rangle$, if $j_n = j'_n = a$ and 
    $j_a = j'_a$, then $\tau(q, j_1,...,j_n) = \tau(q, j'_1,...,j'_n)$, for every $q\in Q$.
\end{itemize}
\end{definition}
Given an initial state, an infinite computation $\lambda$ on a 
turn-based asynchronous game structure is an infinite sequence of states such that 
the successor of each element is fully determined by the moves 
chosen in the previous state.

We denote with $Q^+$ the set of all the finite prefixes of 
the computations $\lambda$ in the concurrent game structure and with $\lambda_i$ 
the prefix of $\lambda$ such that $|\lambda_i| = i$.
A \emph{strategy} for a player $a$ is a function 
$f_a: Q^+ \rightarrow \mathbb{N}$ 
such that $f_a(\lambda_i) \leq d_a(q)$, where $q$ is the last 
element of $\lambda_i$. 
The strategy is \emph{memoryless} if, and only if, for every pair 
of prefixes $\lambda_i, \lambda_j$ ending with the same state, 
$f_a(\lambda_i) = f_a(\lambda_j)$; a memoryless strategy 
can be equivalently defined as a function $f: Q \rightarrow \mathbb{N}$.
We say that a computation $\lambda$ follows a strategy $f_a$ 
iff, for every prefix $\lambda_i$ of a computation, player $a$ 
chooses the move $f_a(\lambda_i)$. 
Let $A$ be a set of players, and $F_A$ be a set of 
strategies, one for each player in $A$.
We denote with \emph{out}$(q, F_A)$ the set of computations 
starting from $q$ and such that the players in $A$ follow the 
strategies in $F_A$. 

On turn-based asynchronous game structures, the syntax of ATL 
is equivalent to the one defined in Sect. \ref{sec:atlpn} for 
distributed net systems. We provide an intuition of the semantics, the formal 
definition is in \cite{AHK02}. 

Let $\langle\langle A \rangle\rangle\eta$ be an ATL formula. 
We can evaluate it by considering a game in which players in $A$ 
are against the others. An infinite computation $\lambda$ is 
winning for players in $A$ iff the computation satisfies the 
formula $\eta$, read as a linear temporal formula, with outermost 
operator $X, G$ or $U$.
The formula $\langle\langle A \rangle\rangle\eta$ is satisfied in a 
state $q$ of the game structure iff there exists a set of 
strategies $F_A$ such that the players in $A$ win all the 
computations in \emph{out}$(q, F_A)$. 
In this case, we say that $F_A$ is a \emph{winning} set of strategies.

We can include some \emph{fairness constraints} to the game 
structure, in order to ignore some computations. 
\begin{definition}
A \emph{fairness constraint} is a pair $\langle a, c \rangle$, 
where $a$ is a player, and $c$ is a function that for every 
state $q\in Q$ selects a subset of moves available for $a$ in $q$.
\end{definition} 
Let $\lambda$ be an infinite computation and $q_i$ its 
$i-th$ element.
A fairness constraint $\langle a, c\rangle$ is \emph{enabled} 
in $q_i$ if $c(q_i) \neq \emptyset$; 
$\langle a, c\rangle$ is \emph{taken} in $q_i$ if there is 
a vector move $\langle j_1, ..., j_n \rangle$ with $j_a \in c(q_i)$ and
$q_{i+1} = \tau(q_i, \langle j_1, ..., j_n \rangle)$.
If $a< n$ we also require that $j_n = a$. 

A computation $\lambda$ is \emph{weakly fair} with respect to 
a fairness constraint $\langle a, c \rangle$ if $\langle a, c \rangle$ 
is not enabled in infinitely many positions of $\lambda$ or if 
it is taken infinitely many times in $\lambda$.
\subsection{Construction of a turn-based asynchronous game
structure from a Petri net game}
\label{sec:const}
Given a distributed net system $\Sigma$ with $k+1$ locations, we construct 
a turn-based asynchronous game structure $G_\Sigma$ in this way:
\begin{itemize}
    \item Every location is represented by a player 
    in $G_\Sigma$. 
    For every $i\in\{1,..., k\}$, we identify the 
    user $u_i$ on the net as player $i$, and the 
    environment as player $k+1$.
    In addition, we insert a fictitious player 
    identified with $k+2$ that has the role of the 
    scheduler.
    \item The reachable markings are all and only 
    the states in $G_\Sigma$. 
    We denote the set of all the states with $Q$, and 
    with $q_0$ the initial marking of $\Sigma$ 
    and the corresponding state in $G_\Sigma$.
    \item We identify the set of propositions as the set $P$ of places: every place can be interpreted as a proposition, 
    that is true if the place is part of the current 
    marking, false otherwise, 
    \item According to the previous point, for every 
    state $q\in Q$, the set of propositions that 
    are true in $q$ is $w(q) = \{p\in P: p\in q\}$.
    \item On the net, for every marking $q$, for every user 
    $a$, $a$ can decide 
    to fire one of the transitions in $T_a$ that are 
    enabled in $q$ or not to move. 
    Then, denoting with $r(q, a)$ the number of transitions 
    that are enabled in $q$ and belong to $T_a$, 
    the player $a$ has $r(q, a)+1$ different possible 
    behaviours in every cut associated with the marking 
    $q$. 
    This is translated on $G_\Sigma$ by putting 
    $d_a(q) = r(q, a)+1$ for every $q\in Q$ and 
    $a$ user. 
    
    $d_{k+1}(q)$ is equal to the number of  
    uncontrollable transitions enabled in $q$. 
    If there are no uncontrollable transitions enabled 
    in $q$, then $d_{k+1}(q) = 1$ and the move 
    remains in the same state.

    For the sake of simplicity, we will refer to each move that 
    changes the state of the system with the name of its associated 
    transition on the distributed net system; we will refer to 
    the move that leaves the system in the same state with $\emptyset$.
    
    Finally, $d_{k+2}(q) = k+1$ for every $q\in Q$. 
    
    The vector $\langle q, j_1, ..., j_{k+2} \rangle$ 
    collects a move of the players in $q$.
    \item For every $q\in Q$, $\tau(q,j_1,...j_{k+2})$ 
    is the state corresponding to the marking that is 
    reached if, starting from the state $q$ in 
    the net, we execute the transition in $j_{j_{k+2}}$.  
\end{itemize}
We represent the fairness constraints on the net with 
weak fairness constraints on $G_\Sigma$. 
We ask that there is no user with a finally postponed move by 
asking that the scheduler is fair, and no player is neglected 
forever. 
Formally, we ask that $\langle k+2, c_j \rangle$ is a weak 
fairness constraint for every $j\in\{1, ..., k+1\}$ and $c_j$ 
is a function such that for every $q\in Q$, $c_j(q) = \{j\}$.

In addition, we need to add constraints to guarantee that 
the plays are weakly fair with respect to uncontrollable transitions. 
%For every marking in the net, for every uncontrollable 
%enabled transition we compute the set of transitions 
%that are in conflict with it. 
For every state $q$, 
if $q$ enables uncontrollable transitions on the net, 
then for every transition $t$ enabled in $q$ we consider the 
subset $t\#(q)$ of uncontrollable transitions in conflict with $t$ 
and enabled in $q$,
and we define a weak constraint in the state $q$ 
in this way: $\langle k+1, c_{t\#} \rangle$, where  
$c_{t\#}(q) = t\#(q) \cup\{t\}$.

Finally, we need to guarantee that, if we are not in a deadlock 
state, but no uncontrollable transition is enabled, the users 
cannot block the system. 
In order to address this problem, we add a set of 
fairness constraints for the users in this way:
for each $q$, in which only users' transitions are 
enabled, and for each player $a_1,..., a_l$ with enabled 
transitions in $q$, we define the fairness constraint 
$\langle a_i, c_q\rangle$, with $i\in \{1,...,l\}$, such that 
$c_q(q)$ is the set of all the moves that do not keep 
the concurrent game structure in the same state, 
and $c_q(q') = \emptyset$ for every $q'\neq q$. 

In the special case of a single user playing against the 
environment, we can avoid this last 
set of constraints and just remove the possibility for the 
user to stay in the same state, if only controllable 
transitions are enabled in that state.
% ============================================================= %
\subsection{Relation between the plays in the two models} 
\label{sec:relplay}
%
% Verificare consistenza della notazione
Let $\Sigma = (P, T, F, q_0, \alpha)$ be a distributed net
system, and $G_\Sigma$ the associated game structure, as defined
in Sect.~\ref{sec:const}.

We now prove some propositions that show the relations between 
the plays on the unfolding and the infinite fair computations 
on the concurrent game structure. This will be helpful to find 
the relation between the existence of winning strategies in the 
two models.

Let $\lambda=q_0q_1\cdots$ be a fair computation on $G_\Sigma$.
For every pair of consecutive states $q_i, q_{i+1}$, such that 
$q_i \neq q_{i+1}$, the move on the concurrent game structure is
associated, by construction, with an enabled transition in $\Sigma$.

Given $\lambda$, we construct a run $\rho$ on the unfolding
of $\Sigma$, starting 
from the initial cut and adding the events associated 
with the moves that bring from a state to the next one in the same 
order that the states have in $\lambda$.
At the same time we 
construct an increasing sequence $\delta$ of cuts: initially, $\delta = \gamma_0$; 
after an event $e$ is added to the run, we add to $\delta$ the cut 
$\gamma = \gamma'\setminus {\pre{e}} \cup \post{e}$, where 
$\gamma'$ is the last cut added in $\delta$ before $e$.
The pair derived in this way will be denoted
by $(\rho,\delta)[\lambda]$.
\begin{proposition}
\label{prop:pcgs2pn}
The pair $(\rho,\delta)[\lambda]$ is a play on $\unf(\Sigma)$.
\end{proposition}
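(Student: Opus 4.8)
The plan is to verify, for $(\rho,\delta)=(\rho,\delta)[\lambda]$, the three requirements of Definition~\ref{def:play}, after first checking that the construction is well posed, i.e.\ that $\rho$ is a run on $\unf(\Sigma)$ and that $\delta$ is an increasing sequence of B-cuts. The backbone of everything is an invariant I would establish first by induction on the construction: writing $\gamma^{(j)}$ for the cut of $\rho$ obtained after inserting the events associated with the first $j$ state-changing moves of $\lambda$, and $q^{(j)}$ for the state reached by $\lambda$ right after those moves, one has $\mu(\gamma^{(j)})=q^{(j)}$. The base case is $\mu(\gamma_0)=m_0=q_0$, from the defining property of the unfolding. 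For the step, a state-changing move of $\lambda$ fires a transition $t$ enabled at the marking $q^{(j)}=\mu(\gamma^{(j)})$; by the properties of branching processes there is a unique event $e$ with $\mu(e)=t$ and $\pre e\subseteq\gamma^{(j)}$, and $e\notin E_\rho$ (otherwise $\pre e$ would already have been consumed and would not lie in the current cut), so inserting $e$ produces again a run and $\mu(\gamma^{(j+1)})=\mu\bigl((\gamma^{(j)}\setminus\pre e)\cup\post e\bigr)=(q^{(j)}\setminus\pre t)\cup\post t=q^{(j+1)}$. As the increasing union of this chain of conflict-free branching processes, each a prefix of $\unf(\Sigma)$, $\rho$ is itself a run on $\unf(\Sigma)$, and the sequence of distinct states $q^{(0)},q^{(1)},\dots$ is exactly $\lambda$ with repetitions deleted.

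Two of the three conditions are then short. That $\delta$ is increasing (already needed for $(\rho,\delta)$ to be a play) and condition~3 follow from the shape of each cut freshly appended to $\delta$, namely $\gamma=(\gamma'\setminus\pre e)\cup\post e$: every element of $\gamma$ either lies in $\gamma'$ or is a postcondition of $e$ and hence follows an element of $\pre e\subseteq\gamma'$, and symmetrically every element of $\gamma'$ is or precedes an element of $\gamma$, while $e$ connects some $b\in\pre e$ to some $b'\in\post e$ with $b\,F^+\,b'$, giving the required strict pair; thus $\gamma'<\gamma$. Condition~3 is immediate: for $e\in E_\rho$, the cut appended right after $e$ contains $\post e$ and $e\,F^+\,b$ for $b\in\post e$, so $e$ precedes that cut.

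Next I would treat the finite case, obtaining condition~2 along the way. The run $\rho$ is finite exactly when $\lambda$ performs only finitely many state-changing moves, i.e.\ $\lambda$ is eventually constant at some state $q^\ast$. I claim $q^\ast$ is then a deadlock state of $G_\Sigma$: if some transition were enabled at the marking $q^\ast$, then, whether it is an uncontrollable transition (consider the constraint $\langle k+1,c_{t\#}\rangle$ attached to $q^\ast$ with $t$ chosen uncontrollable) or only users' transitions are enabled at $q^\ast$ (consider the blocking constraints $\langle a_i,c_{q^\ast}\rangle$, or, with a single user, the removal of its stay move), there is a fairness constraint enabled at every position of the tail of $\lambda$ but never taken, because every move satisfying it changes the marking while $\lambda$ does not move away from $q^\ast$ --- contradicting the fairness of $\lambda$. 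A deadlock state of $G_\Sigma$ corresponds to a deadlock marking, so the final cut of $\rho$ enables no event; hence no event at all --- controllable or not --- can be added to $\rho$, which yields condition~2 and also condition~1 in the finite case.

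The main obstacle is condition~1 when $\rho$ is infinite: weak fairness of $\rho$ with respect to uncontrollable events has to be extracted from the fairness of $\lambda$, which only mentions per-state constraints. Suppose, for a contradiction, that some uncontrollable event $e$ can be added to $\rho$ keeping it a run; then $\pre e\subseteq B_\rho$, $e\notin E_\rho$, and no event of $\rho$ consumes a condition of $\pre e$ (such an event would be in conflict with $e$). By the marking invariant, $\pre{\mu(e)}\subseteq\mu(\gamma^{(j)})=q^{(j)}$ for all large $j$, so $u:=\mu(e)$ is enabled at every state $\lambda$ visits from some position $i_0$ on; in particular each such state enables an uncontrollable transition. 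Since $Q$ is finite and $\lambda$ changes state infinitely often, some state $q^\ast$ at which $u$ is enabled is visited infinitely often beyond $i_0$. I would then look at the fairness constraint $\langle k+1,c_{u\#}\rangle$ attached to $q^\ast$, whose selected set $u\#(q^\ast)\cup\{u\}$ contains $u$ and is hence nonempty: it is enabled at every position where $\lambda$ is at $q^\ast$, hence infinitely often. But it is never taken: being taken would force the successor state to be the marking reached from $q^\ast$ by firing some $v\in u\#(q^\ast)\cup\{u\}$, and firing such a $v$ removes from the marking at least one place of $\pre u$ --- all of $\pre u$ if $v=u$, and a place of $\pre v\cap\pre u\neq\emptyset$ if $v\# u$ --- which, using contact-freeness to see that $\post v$ does not restore it, would leave $u$ disabled in the successor state, contradicting that this position is still beyond $i_0$. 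So $\langle k+1,c_{u\#}\rangle$ is enabled infinitely often yet taken only finitely often, contradicting the fairness of $\lambda$. Hence no uncontrollable event can be added to $\rho$, condition~1 holds, and $(\rho,\delta)[\lambda]$ is a play. I expect this last step --- reconciling the local fairness of the game structure with the global fairness demanded on the net, using the finiteness of $Q$ and the ``disabling'' observation --- to be where the real work lies.
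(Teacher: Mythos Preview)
Your proposal follows the same strategy as the paper's proof: verify the three clauses of Definition~\ref{def:play}, with the real content in showing that no uncontrollable event is perpetually enabled in $\rho$. You are also considerably more careful than the paper about the preliminaries (the marking invariant $\mu(\gamma^{(j)})=q^{(j)}$, that $\rho$ is a run, that $\delta$ is increasing); the paper simply says ``by construction'' for condition~3 and never spells out the invariant. Your treatment of the finite case (condition~2) is likewise in the same spirit.

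There is, however, a slip in your argument for condition~1 in the infinite case. You already establish that $u=\mu(e)$ is enabled at \emph{every} state $\lambda$ visits from position $i_0$ on. At that point you should stop: the constraint $\langle k+1,c_{u\#}\rangle$ is a single constraint whose selection function, according to the paper's own proof, is $c_{u\#}(q)=\{u\}\cup u\#(q)$ for \emph{each} $q\in Q$, so it is enabled at every position $\ge i_0$ (continuously), and your disabling argument shows it is never taken after $i_0$. That is exactly a violation of \emph{weak} fairness as defined in the paper (``not enabled in infinitely many positions, or taken infinitely many times''). Instead you pass through a recurrent state $q^\ast$ and conclude that the constraint is ``enabled infinitely often yet taken only finitely often''. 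That conclusion is a violation of \emph{strong} fairness, not weak fairness: if the constraint were really ``attached to $q^\ast$'' only, it would be disabled at the infinitely many positions where $\lambda$ is at other states, and weak fairness would hold vacuously. So the detour through $q^\ast$ is both unnecessary and, as phrased, does not close the argument. The paper avoids this entirely by treating $c_{t\#}$ as global and observing directly that it is enabled at all $q_m$ with $m>n$; your own ``enabled at every state from $i_0$ on'' already gives you this, and the rest of your disabling reasoning (including the use of contact-freeness) then finishes the proof without any appeal to finiteness of $Q$.
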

\begin{proof}
%per costruzione \rho è una corsa
The sequence of cuts $\delta$ satisfies condition 3 
of Definition \ref{def:play} by construction. 

We need to prove that there is no uncontrollable event $e\not\in\rho$ 
such that $\rho \cup \{e\}$ is a run on the unfolding. 
By contradiction, assume that we can find such an event $e$. 
Then, there is $n\in \mathbb{N}$ such that, on $G_\Sigma$, the move $\mu(e) = t$ is 
enabled in every state $q_m\in \lambda : m>n$, but is never chosen. 
The reason cannot be that the environment player is never selected 
by the scheduler, otherwise, the constraint $\langle k+2, c_{k+1}\rangle$ 
would not be satisfied, and $\lambda$ would not be a fair computation. 
Hence the move $t$ is available in every state $q_m$ for the player $k+1$, but he never chooses it. 
Then the computation does not respect the fairness constraint 
$\langle k+1, c_{t\#} \rangle$, where, for each $q\in Q$, 
$c_{t\#}(q)$ is the set $\{t\} \cup t\#(q)$: by construction, if a move 
associated with $t$ or with one of the transitions in conflict with $t$ 
occurs in $q_m$, then $t$ would not be available in $q_{m+1}$. 

Finally, we have to show that point 2 of definition \ref{def:play} holds: 
by contradiction, we assume that $\rho$ is finite, and there 
is an event $e\in E_a$, controllable by user $a$, such that $\rho \cup \{e\}$ is a run. 
If $\rho$ is finite, there is a position $i$ in $\lambda$ such 
that $q_j = q_{j+1}$ for every $j\geq i$. In the state $q_j$, 
there is no uncontrollable enabled transition, otherwise 
point 1 of the definition would not be respected. 
Then there must be a fairness constraint 
$\langle a, c_{q_j} \rangle$ not respected by $\lambda$. 
Hence, if $\lambda$ is fair, there cannot be controllable enabled 
transitions in $q_j$, and $q_j$ must be a deadlock in the distributed 
net system.

Hence $(\rho, \delta)[\lambda]$ constructed in this way is a play
on the unfolding.
\qed\end{proof}
Given a computation $\lambda = q_0q_1...q_n...$ (finite or infinite) 
we can construct a new sequence $\pi(\lambda)$ in which all the 
consecutive states $q_i = q_{i+1}$ has been identified. 
Given an infinite sequence $\lambda$, $\pi(\lambda)$ can be also 
infinite, if there is no state $q_i\in \lambda$ such that for each 
$q_j : j > i$ $q_j = q_{j+1}$, or finite otherwise.

In general, a game on the unfolding is not equivalent 
to a game on the concurrent game structure. This is mainly 
due to the fact that infinite sequences of states on 
the asynchronous game structure are not equivalent to sequences of cuts on 
the unfolding: (1) the cuts do not report the presence of 
moves that do not change the state of the system, 
while the sequences of states do; (2) on the unfolding of 
distributed net systems, 
the memory is `stored' in the cuts, from which a user can 
determine which set of events fired, but not their total 
order, while a sequence of events 
gives full information about the order. 

In order to address point (1), 
we consider only ATL
formulas that do not use the $X$ (next) operator. This is due to the 
fact that, given two infinite computations $\lambda_1$ and $\lambda_2$, 
if $\pi(\lambda_1) = \pi(\lambda_2)$, then the players cannot 
distinguish on the system whether $\lambda_1$ or $\lambda_2$ occurred, 
hence we want that $\lambda_1$ is a winning play for the users iff $\lambda_2$ is.
\begin{lemma}
\label{prop: equiv}
Let $\Sigma$ be a distributed net system, $\psi$ be an ATL formula that does 
not use the operator $X$ and $\lambda_1$ and $\lambda_2$ two infinite 
weakly fair computations on $G_\Sigma$ such that 
$\pi(\lambda_1) = \pi(\lambda_2)$. 
Then if there is a winning strategy for the users, $\lambda_1$ is a 
winning play for the users iff $\lambda_2$ is.
\end{lemma}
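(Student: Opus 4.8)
The plan is to proceed by structural induction on the ATL formula $\psi$, showing that for every subformula $\phi$ of $\psi$ and every position $i$, the truth of $\phi$ "at $\lambda_1$ from position $i$" coincides with its truth "at $\lambda_2$ from the corresponding position", where the correspondence is dictated by the common stutter-free sequence $\pi(\lambda_1) = \pi(\lambda_2)$. Concretely, I would first set up notation: write $\pi(\lambda_1) = \pi(\lambda_2) = \sigma = r_0 r_1 r_2 \cdots$, and for each computation $\lambda_j$ let $\theta_j : \mathbb{N} \to \mathbb{N}$ be the monotone surjection that records, for each index $m$, which block of $\sigma$ the state $q_m$ of $\lambda_j$ belongs to (so $q_m = r_{\theta_j(m)}$). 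The point of collapsing stuttering is that the winning conditions for $G$, $U$ (and $F$), as recalled in Sect.~\ref{sec:atlpn} and Sect.~\ref{s:tbags}, are stated in terms of the \emph{set} of states visited and the order in which genuinely new states appear, not in terms of how long the system idles in a state; since the $X$ operator is excluded, no subformula can observe a single stuttering step.

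**Next I would handle the base and Boolean cases**, which are immediate: an elementary proposition $p$ holds in $q_m$ (a state of $\lambda_1$) iff $p \in q_m = r_{\theta_1(m)}$, and likewise for $\lambda_2$; so the truth of $p$ along the two computations is the same function of the block index. Negation and disjunction propagate this directly. **Then the temporal cases**, which carry the real content. For $\langle\langle S\rangle\rangle G\,\phi_1$ read as the linear winning condition $G\,\phi_1$: a computation $\lambda_j$ is winning iff $\phi_1$ holds in every state along it, which by the induction hypothesis is equivalent to $\phi_1$ holding in every block $r_\ell$ of $\sigma$ — a condition that does not mention $\lambda_j$ at all, hence is the same for $\lambda_1$ and $\lambda_2$. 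For $\langle\langle S\rangle\rangle U(\phi_1,\phi_2)$: $\lambda_j$ is winning iff there is a position $m$ with $q_m \models \phi_2$ and $q_{m'} \models \phi_1$ for all $m' < m$; by the induction hypothesis this is equivalent to the existence of a block index $\ell$ with $r_\ell \models \phi_2$ and $r_{\ell'} \models \phi_1$ for all $\ell' < \ell$, again a property of $\sigma$ alone. The operator $F$ is the special case $\phi_1 = \mathit{true}$ and needs no separate argument. Since the outermost operator of $\psi$ is one of $G, U, F$ (never $X$), applying the appropriate case at the top level gives that $\lambda_1$ is a winning play iff $\lambda_2$ is, which is the claim; the hypothesis that a winning strategy for the users exists is used only to place us in the situation where "winning play" is the notion under discussion.

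**The main obstacle** is being precise about the correspondence between positions of $\lambda_1$, positions of $\lambda_2$, and blocks of $\sigma$, and in particular checking that "for all $m' < m$ in $\lambda_j$" really does translate to "for all $\ell' < \ell$ in $\sigma$" without an off-by-one slip at the boundary where a new block begins: if $q_m$ is the first state of block $\ell$, then the states $m' < m$ of $\lambda_j$ are exactly those in blocks $0, \ldots, \ell-1$ together with nothing from block $\ell$, so the $\phi_1$-requirement over them is exactly the $\phi_1$-requirement over blocks $0,\ldots,\ell-1$; and since $\phi_1$'s truth is constant within a block (by the induction hypothesis), one may as well take the witness $m$ in $\lambda_j$ to be the first state of its block, making the two sides match cleanly. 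A secondary subtlety is that the two computations might collapse to a \emph{finite} $\sigma$ (when the system eventually idles forever); this is harmless because $G$, $U$ and $F$ all have well-defined meanings on the tail, but I would note explicitly that weak fairness of $\lambda_1$ and $\lambda_2$ plays no role here beyond guaranteeing they are the infinite computations we are comparing — the equivalence is purely about stutter-insensitivity of the $X$-free fragment.
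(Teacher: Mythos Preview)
Your proposal is correct and follows essentially the same approach as the paper. The paper's proof is a one-line appeal to stutter-insensitivity of the $X$-free fragment (``the validity of $\psi$ depends only on the sequence of distinct states''); your structural induction on $\psi$, with the block correspondence via $\sigma$ and the surjections $\theta_j$, is exactly the standard unpacking of that claim, and indeed matches the paper's own extended version of the argument almost case for case.
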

\begin{proof}
If the operator $X$ is not allowed, then the
validity of $\psi$ depends only on the sequence of distinct states
in computations; since, by hypothesis, $\pi(\lambda_1) = \pi(\lambda_2)$,
the thesis follows.
\qed\end{proof} 
Every play on the unfolding can be associated with a set 
of infinite fair computations on the turn-based asynchronous game structure: 
let $(\rho, \delta)$ be a play on the unfolding; 
as first step, we consider all the possible sequentializations 
included between the initial cut $\gamma_0$ and the next 
cut $\gamma_1\in \delta$. For each of these linearizations we can 
find a prefix of a computation on the concurrent game structure by executing 
on it, from the state corresponding to the initial marking, 
all the events in the order given by the linearization; 
then, we extend all these prefixes by considering the 
successive pairs of consecutive cuts, and the sequentializations 
of the events between them. 
If the play on the distributed net system reaches a deadlock, then the 
only possibility in the associated concurrent game structure is to execute the 
transition that remains in the same state infinitely often. 

In this way, we obtain a set of infinite computations 
associated with a play $(\rho, \delta)$, 
denoted by $\Lambda(\rho, \delta)$. 
\begin{proposition}
Let $(\rho, \delta)$ be a play on $\unf(\Sigma)$.
For every computation $\lambda\in \Lambda(\rho, \delta)$ on $G_\Sigma$, 
there is at least a computation $\lambda': \pi(\lambda') = \pi(\lambda)$ 
satisfying the fairness constraints defined in Section \ref{sec:const}.
\label{prop:ppn2cgs}
\end{proposition}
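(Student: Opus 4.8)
The plan is to take an arbitrary $\lambda \in \Lambda(\rho,\delta)$ and, by a local repair argument, produce a computation $\lambda'$ with $\pi(\lambda') = \pi(\lambda)$ that additionally satisfies all three families of fairness constraints introduced in Section~\ref{sec:const}: the scheduler constraints $\langle k+2, c_j\rangle$, the environment constraints $\langle k+1, c_{t\#}\rangle$, and the user constraints $\langle a, c_q\rangle$ at states where only controllable transitions are enabled. The key observation is that $\lambda$ already records exactly the ``right'' sequence of state changes (it comes from sequentializing the events of the play $(\rho,\delta)$, and $(\rho,\delta)$ is a play, hence weakly fair with respect to uncontrollable transitions and finite only at a deadlock); what $\lambda$ may fail to do is make the scheduler pick the correct player at each step, and interleave enough ``stay in place'' moves so that the weak-fairness bookkeeping works out. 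So $\lambda'$ will be obtained from $\lambda$ by (i) relabelling the scheduler component of each move vector, and (ii) inserting, between state changes, finitely many move vectors that leave the state unchanged — both operations preserve $\pi(\cdot)$.

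First I would handle the environment constraints. For each state change in $\lambda$ that corresponds to firing a transition $t'$, if $t'$ is uncontrollable I set the scheduler to select player $k+1$; whenever a constraint $\langle k+1, c_{t\#}\rangle$ is enabled at the current state, the construction in Section~\ref{sec:const} guarantees $c_{t\#}(q) = \{t\} \cup t\#(q)$, and since $(\rho,\delta)$ is weakly fair for uncontrollable events, in $\rho$ either $t$ or some transition in $t\#$ eventually fires — translating to that constraint being taken. Here I must be careful that the constraint is about the state where the conflict exists: I would argue that along the already-fixed sequence $\pi(\lambda)$, no enabled uncontrollable transition can be postponed forever, because the play underlying $\lambda$ is a run to which no uncontrollable event can be added, so eventually that event (or a conflicting one) appears in the sequentialization. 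Second, I would handle the user constraints $\langle a, c_q\rangle$: these only live at states $q$ where no uncontrollable transition is enabled; at such $q$, since $(\rho,\delta)$ is a play, it is not stuck unless $q$ is a genuine deadlock, so in $\lambda$ the run does leave $q$ via some controllable transition, which discharges the constraint for the player that owns it — and for the other players with moves enabled at $q$ but which don't fire, I would check that $c_q$ is defined to be ``any move that changes the state'', so taking the single actual state-changing move suffices for all of them simultaneously. Third, the scheduler constraints $\langle k+2, c_j\rangle$ with $c_j(q)=\{j\}$: these require the scheduler to pick each player $j$ infinitely often. This is the step requiring genuine insertion of stutter moves: between consecutive genuine state changes I would splice in, in round-robin order, move vectors in which the scheduler selects each player $j \in \{1,\dots,k+1\}$ and that player chooses its $\emptyset$ move (available in every state by construction, since $d_a(q) = r(q,a)+1 \ge 1$ and $d_{k+1}(q)\ge 1$). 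Each such inserted vector keeps the state fixed, so $\pi$ is unchanged; doing this infinitely often along $\lambda$ (and, in the deadlock case, forever) makes every scheduler constraint taken infinitely often.

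The main obstacle I expect is the interaction between the two weak-fairness requirements that genuinely compete for scheduler turns: the environment constraints $\langle k+1, c_{t\#}\rangle$ may be enabled persistently across a long block of controllable moves, and I must ensure that my round-robin insertion of stutter moves does not let me ``dodge'' taking such a constraint — i.e., I need that whenever an environment constraint is enabled, it eventually gets taken by a \emph{real} move, not merely by stutter. The resolution is that the real moves are not under my control: they are exactly the events of the fixed play $(\rho,\delta)$, and weak fairness of the play with respect to uncontrollable transitions already guarantees that the required environment move (or a conflicting one) appears in finitely many steps of \emph{any} sequentialization, $\lambda$ included. So the stutter insertions are harmless padding, and the environment obligations are met ``for free'' by the play's own fairness. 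I would also need a small remark on the degenerate single-user case, where Section~\ref{sec:const} drops the user constraints and instead forbids the stutter move at states with only controllable transitions enabled — there $\lambda$ by construction already has no such stutter, and the argument simplifies. Assembling these three local repairs into a single $\lambda'$ and noting each preserves $\pi(\lambda') = \pi(\lambda)$ completes the proof.
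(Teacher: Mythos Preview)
Your overall scaffold---fix the sequence of state changes coming from $(\rho,\delta)$ and repair fairness by relabelling scheduler components and inserting stutter steps---is the same as the paper's. Two of your three ``local repairs'', however, do not go through as written.

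\medskip

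\textbf{Environment stutters.} You assert that player $k{+}1$ has an $\emptyset$ move in every state, citing $d_{k+1}(q)\ge 1$. That is not what the construction says: at a state $q$ where some uncontrollable transition is enabled, $d_{k+1}(q)$ equals the \emph{number} of enabled uncontrollable transitions and every one of those moves changes the state; the environment only gets a stay move at states where no uncontrollable transition is enabled. So your round-robin insertion, which schedules $k{+}1$ between \emph{every} pair of genuine state changes, would force a real environment move at such states and destroy $\pi(\lambda')=\pi(\lambda)$. The paper avoids this by \emph{not} inserting environment stutters at those states: either environment events occur infinitely often in $\rho$ (so the scheduler constraint $\langle k{+}2,c_{k+1}\rangle$ is already taken along $\lambda$), or from some point on no uncontrollable transition is enabled, and there the environment's unique move is the stay move and stutters are safe. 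You need this case split; the blanket round-robin does not work.

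\medskip

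\textbf{User constraints $\langle a,c_q\rangle$.} Your claim that ``taking the single actual state-changing move suffices for all of them simultaneously'' is wrong. By the definition of \emph{taken} for $a<n$, the constraint $\langle a,c_q\rangle$ is taken at position $i$ only if the move vector has $j_n=a$, i.e.\ the scheduler selects $a$. A transition fired by player $b\neq a$ at $q$ therefore does \emph{not} take the constraint for $a$. The correct (and simpler) argument, which the paper uses, is via the ``not enabled'' disjunct of weak fairness: since $c_q(q')=\emptyset$ for every $q'\neq q$ and the computation does not remain at $q$ forever (because $\rho$ is infinite and your insertions are finite between real moves), there are infinitely many positions where the constraint is not enabled, so weak fairness holds with nothing to discharge.

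\medskip

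Your treatment of the environment constraints $\langle k{+}1,c_{t\#}\rangle$ is essentially the paper's contradiction argument and is fine.
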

\begin{proof}
The first set of constraints guarantees that no player is neglected 
forever by the scheduler. 
If $\rho$ ends with a deadlock, then 
it is easy to see that every associated computation is 
fair on $G_\Sigma$, because the only available move for 
all players after a finite number of steps is the one that 
remains in the same state. 

For the same reason, when $\rho$ is finite, also the other constraints 
are respected: after a finite number of states there is 
no enabled transition, therefore those 
constraints are always taken when the system reaches 
the state corresponding to the deadlock.

We now consider the case in which $\rho$ is infinite. 
Since a play on the unfolding must be fair with respect to 
uncontrollable transitions, in the computation $\lambda$ 
the fairness constraint $\langle k+2, c_{k+1}\rangle$ must be respected. 
If $\rho$ has infinitely many events belonging to location
$u_i$, then the fairness constraint $\langle k+2, c_i\rangle$ 
is also satisfied by construction; 
otherwise, since in case of an infinite 
play, the users have always 
the possibility not to move in the system, we can 
construct a sequence  $\lambda' : \pi(\lambda) = \pi(\lambda')$ 
with some repeated states, that represent points of the sequence in which the 
user was selected by the scheduler in $G_\Sigma$, but the 
state of the system did not change. 

The set of fairness constraints for the environment $k+1$ must be 
satisfied: by contradiction, we assume that there is a 
function $c_{t\#}$, such that $\langle k+1, c_{t\#}\rangle$ 
is not satisfied in $\lambda'$; this means that $\langle k+1, c_{t\#}\rangle$ 
is not enabled in a finite number of states in $\lambda$, but 
it is taken only a finite number of times. 
This means that there is an uncontrollable event $e\not\in\rho$ in the 
net that is enabled in all the cuts compatible with $\delta$, except 
for a finite set; since $\rho$ is infinite, there must be a 
cut $\gamma\in \delta$ such that $e$ is enabled in $\gamma$ and 
in all the cuts $\gamma_j \in \rho : \gamma_j > \gamma$; hence 
$\rho \cup \{e\}$ is a run on $\unf(\Sigma)$.
This is in contradiction with the hypothesis that $\rho$ is a play. 

Finally, the set of constraints for the users are also satisfied: 
consider a fairness constraint referred to a user $a$ in a state 
$q$; since there is no position $i$ in $\lambda$ such that 
$q_j = q$ for every $j\geq i$, there are infinitely many 
positions in which the constraint is not enabled. 
\qed\end{proof}
% 
%Given a play $(\rho, \delta)$, we will denote with $\lambda(\rho, \delta)$ 
%the fair infinite computation constructed in the proof
%of Proposition~\ref{prop:ppn2cgs}.
%
\begin{lemma}
\label{lem:eqwc}
Let $\psi \equiv \langle\langle users \rangle\rangle \eta$ be an ATL formula where
each path quantifier is the set of all the users and without the $X$ operator:
\begin{enumerate}
    \item if $(\rho, \delta)$ is a winning play for the users on 
    the unfolding, then any infinite fair computation $\lambda\in 
    \Lambda(\rho, \delta)$ is winning for the users;
    \item if $\lambda$ is an infinite fair computation starting in 
    $q_0$ that is winning for the users, 
    the derived play $(\rho, \delta)[\lambda]$ 
    on the unfolding is winning for the users.
\end{enumerate}
\end{lemma}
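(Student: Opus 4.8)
The plan is to analyze the statement case by case on the outermost temporal operator of $\eta$, since that is how both the winning-play notions (for the unfolding and for $G_\Sigma$) and the ATL semantics are defined. The key bridge between the two models is already available: Proposition~\ref{prop:pcgs2pn} tells us $(\rho,\delta)[\lambda]$ is a play, Proposition~\ref{prop:ppn2cgs} together with Lemma~\ref{prop: equiv} lets us freely move between a computation $\lambda$ and any $\lambda'$ with $\pi(\lambda')=\pi(\lambda)$ without changing whether it is winning, and the construction of $\Lambda(\rho,\delta)$ shows every linearization of a play arises as some $\lambda\in\Lambda(\rho,\delta)$. The core observation I would make explicit at the start of the proof is that there is a tight correspondence between the cuts $\gamma$ compatible with $\delta$ and the distinct states occurring in any $\lambda\in\Lambda(\rho,\delta)$: the marking $\mu(\gamma)$ of each such cut is a state visited by $\pi(\lambda)$, and conversely every distinct state of $\pi(\lambda)$ is $\mu(\gamma)$ for some cut $\gamma$ compatible with $\delta$. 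Since propositions are places and $\gamma\models p$ iff $p\in\mu(\gamma)$, the truth value of any $X$-free (sub)formula is the same along the cuts of $(\rho,\delta)$ as along the states of $\pi(\lambda)$.

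For part~1, suppose $(\rho,\delta)$ is a winning play and fix $\lambda\in\Lambda(\rho,\delta)$. If $\eta = G\phi_1$: every cut compatible with $\delta$ satisfies $\phi_1$, hence by the correspondence above every distinct state of $\lambda$ satisfies $\phi_1$, so $\lambda$ satisfies $G\phi_1$ read as a linear formula, i.e.\ $\lambda$ is winning. If $\eta = U(\phi_1,\phi_2)$: here I would use the refinement clause in the definition of a winning play on the unfolding. Given $\lambda$, the sequence of distinct states of $\lambda$ corresponds to one particular maximal refinement $\delta'$ of $\delta$ (the linearization that $\lambda$ realizes); applying the winning-play condition to this $\delta'$ gives a cut $\gamma$ compatible with $\delta'$ satisfying $\phi_2$ with all earlier compatible cuts satisfying $\phi_1$, and translating along the correspondence yields a position in $\lambda$ where $\phi_2$ holds with $\phi_1$ holding at all earlier distinct states — exactly the linear $U$ condition. (The $X$ case is excluded by hypothesis, so no further subcase is needed.)

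For part~2, suppose $\lambda$ is an infinite fair computation starting in $q_0$ that is winning, and let $(\rho,\delta)=(\rho,\delta)[\lambda]$, which is a play by Proposition~\ref{prop:pcgs2pn}. By construction of $(\rho,\delta)[\lambda]$, each cut added to $\delta$ is obtained by firing exactly one event, so the cuts of $\delta$ are in order-isomorphic correspondence with the distinct states of $\lambda$, and moreover $\delta$ is itself already a maximal refinement of itself in the relevant sense: every cut compatible with $\delta$ is one of the $\gamma_i$. For $\eta=G\phi_1$: $\lambda$ satisfies the linear $G\phi_1$, so every distinct state satisfies $\phi_1$, hence every cut compatible with $\delta$ satisfies $\phi_1$, so $(\rho,\delta)$ is winning. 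For $\eta=U(\phi_1,\phi_2)$: the linear $U$ condition on $\lambda$ gives a first position where $\phi_2$ holds with $\phi_1$ at all earlier distinct states; the corresponding cut $\gamma$ is compatible with $\delta$ and every $\gamma'<\gamma$ compatible with $\delta$ corresponds to an earlier distinct state, hence satisfies $\phi_1$; and since any refinement $\delta'$ of $\delta$ has the same underlying cuts of $\delta$ (the $\delta$ produced here is already maximal in cuts, as each step fires one event), the refinement quantifier in the winning-play definition is satisfied too.

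**Main obstacle.** The delicate point is the handling of the refinement/linearization quantifier in the $U$ case, and making the cut–state correspondence genuinely precise. On the unfolding, a play $(\rho,\delta)$ can have many events between consecutive $\gamma_i$, so ``compatible with $\delta$'' ranges over many cuts and a winning play must behave well for all refinements $\delta'$; on $G_\Sigma$ a computation $\lambda$ picks one total order. Part~1 must therefore show that \emph{every} linearization coming from $\Lambda(\rho,\delta)$ inherits the $U$-property — this is where the universal quantification over refinements in the winning-play definition is exactly what is needed, so the statement is tailored to go through, but the argument has to match ``refinement $\delta'$'' with ``distinct-state sequence of $\lambda$'' carefully. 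Part~2 conversely must show that the particular $\delta$ built from $\lambda$ has no ``hidden'' coarseness — i.e.\ that its compatible cuts are exactly its own elements, which holds precisely because $(\rho,\delta)[\lambda]$ is constructed one event at a time. I expect no real difficulty beyond bookkeeping once the cut–state dictionary is set up; the $G$ case is essentially immediate from Lemma~\ref{prop: equiv}-style reasoning, and the absence of $X$ removes the only genuinely problematic operator.
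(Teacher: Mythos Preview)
Your proposal is correct and follows essentially the same approach as the paper: a case analysis on the outermost temporal operator ($G$ and $U$), using that a computation $\lambda\in\Lambda(\rho,\delta)$ corresponds to a maximal refinement of $\delta$ for part~1, and that the $\delta$ produced by $(\rho,\delta)[\lambda]$ is already maximal (one event per step) for part~2. The paper's proof is organized identically and uses the same cut--state correspondence via markings; your write-up just makes the dictionary and the role of maximal refinements more explicit.
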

\begin{proof}
In order to prove this lemma, we have to consider $G$ and $U$ as 
outermost operator in $\eta$.
\begin{itemize}
    \item Let $\psi\equiv\langle\langle users\rangle\rangle G \phi$,  
    $(\rho, \delta)$ a winning play for the users and $\lambda\in \Lambda(\rho, \delta)$. 
    By construction, $\lambda$ is a sequence of states associated 
    with the markings of a refinement $\delta'$ of $\delta$, 
    in the same order as in $\delta'$. 
    Since $\phi$ must be satisfied in every cut $\gamma\in \delta'$, 
    and for every $\gamma$ the set of propositions that are 
    true in $\gamma$ and in $q = \mu(\gamma)$ is the same,
    it must also be satisfied in each state $q\in \lambda$.
    Viceversa, let $\lambda$ be a winning computation for the 
    users and $(\rho, \delta)[\lambda]$ the associated play. 
    By construction, $i-$th element in $\delta$ is associated 
    with the $i-$th state in $\pi(\lambda)$ for each $i\in \mathbb{N}$, 
    hence if $\phi$ is verified in each state in $\lambda$, it 
    must be satisfied in each cut in $\delta$. 
    Since all the cuts compatible with $\delta$ also belong to it 
    by construction, then $(\rho, \delta)[\lambda]$ is winning 
    for the users.

    \item $\psi\equiv\langle\langle users\rangle\rangle U (\phi_1, \phi_2)$, 
    $(\rho, \delta)$ be a winning play for the users, and $\lambda\in \Lambda(\rho, \delta)$. 
    By construction, the sequence of states in $\lambda$ is 
    associated with a maximal refinement $\delta'$ of $\delta$. 
    By definition, there must be a cut $\gamma$ compatible with 
    $\delta'$ in which $\phi_2$ is verified. By definition of 
    maximal refinement, $\gamma\in \delta'$, therefore there is 
    a state $q\in \lambda$ in which $\phi_2$ is verified. 
    For each cut $\gamma'\in \delta' : \gamma' < \gamma$, 
    $\phi_1$ is satisfied. By construction, each state $q'\in\lambda$ 
    preceding $q$ is associated with one of these cuts $\gamma'$, 
    hence in each of them $\phi_1$ is satisfied. 
    
    Viceversa, let $\lambda$ be an infinite fair computation 
    winning for the users, and $(\rho, \delta)$ the associated 
    play on the unfolding.
    %Togliere il < fra stati e trasformarlo in < fra indici
    Let $q_i$ be the state in which $\phi_2$ 
    is true for the first time; all the states $q_j\in \lambda : 
    i < j$ satisfy $\phi_1$. By construction, each cut in 
    $\delta$ can be associated with a state in $\lambda$, and the 
    order in which the cuts appear in $\delta$ is the same in 
    which the associated states are in $\lambda$, hence there must 
    be a cut $\gamma\in\delta$ associated with the state $q_i$, in 
    which $\phi_2$ is verified and such that for all the cuts 
    $\gamma'\in\delta : \gamma'<\gamma$, $\phi_1$ is verified in 
    $\gamma'$. By construction, $\delta$ cannot be refined further, 
    hence the previous fact is enough to state that $(\rho, \delta)$ 
    is winning for the users on the net.
\end{itemize}
\qed
\end{proof}
\subsection{Relation between winning strategies in the two models} 
\label{sec:relstrat}
We now restrict ourselves to the case of memoryless strategies, 
formulas without $X$ operator and in which in all the quantifiers 
there is the set of all the users. 
\begin{theorem}
Let $\psi$ be an ATL formula as described above, $\Sigma = (P, T, F, q_0, \alpha)$ be a distributed net system, $\unf(\Sigma)$ its unfolding, and $G_\Sigma$ the associated game 
structure. 
A set of memoryless winning strategies exists on $\unf(\Sigma)$ 
for $\psi$ iff a set of memoryless winning strategies exists on $G_\Sigma$ from $q_0$ for $\psi$.
\end{theorem}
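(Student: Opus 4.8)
The plan is to prove the two directions separately, using the machinery built in Sections~\ref{sec:relplay} and the correspondence $\lambda \mapsto (\rho,\delta)[\lambda]$ together with $(\rho,\delta) \mapsto \Lambda(\rho,\delta)$. Recall that for memoryless strategies, both on the unfolding (Def.~\ref{def:strategy}) and on $G_\Sigma$, a strategy is a function of the current state/marking; and by construction the reachable markings of $\Sigma$ coincide with the states $Q$ of $G_\Sigma$, and the moves of user $a$ at marking $q$ in $G_\Sigma$ are exactly the enabled transitions of $T_a$ at $q$ together with the idle move $\emptyset$. This gives an almost literal translation of a memoryless strategy $f_a : Q \to 2^{T_a}$ on the net into a memoryless strategy on $G_\Sigma$, with one subtlety: on the net a strategy returns a \emph{set} of transitions, whereas on $G_\Sigma$ a player picks a single move per turn. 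The idea is that the scheduler, picking user $a$ repeatedly, realizes each element of $f_a(q)$ in turn; conversely a single-move strategy on $G_\Sigma$ at $q$ is turned into the singleton set it names (or $\emptyset$ for the idle move). I would make this translation $F_S \leftrightarrow F_S^{G}$ precise first.

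\textbf{From the net to the game structure.}
Assume $F_{\mathit{users}}$ is a set of memoryless winning strategies on $\unf(\Sigma)$ for $\psi$. Let $F^G$ be the translated memoryless strategies on $G_\Sigma$; I would also need a scheduler discipline, but note the scheduler is the opponent here (player $k+2$ is not among the users, and in $\langle\langle \mathit{users}\rangle\rangle$ it plays against them), so I must show the users win \emph{whatever} the scheduler does among fair computations. Take any fair computation $\lambda \in \mathit{out}(q_0, F^G)$. By Proposition~\ref{prop:pcgs2pn}, $(\rho,\delta)[\lambda]$ is a play on $\unf(\Sigma)$; one checks it is consistent with $F_{\mathit{users}}$ (the consistency conditions of Def.~\ref{def:cplay} follow from the fairness constraints $\langle k+2, c_j\rangle$ ensuring no user is finally postponed, and from $\lambda$ following $F^G$). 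Since $F_{\mathit{users}}$ is winning, $(\rho,\delta)[\lambda]$ is a winning play, hence by part~2 of Lemma~\ref{lem:eqwc} $\lambda$ is winning for the users. As $\lambda$ was an arbitrary fair computation in $\mathit{out}(q_0, F^G)$, $F^G$ is winning on $G_\Sigma$. One must also confirm $\mathit{out}(q_0,F^G)$ is nonempty after restricting to fair computations, which follows from the existence of a consistent play for $F_{\mathit{users}}$ and Proposition~\ref{prop:ppn2cgs}.

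\textbf{From the game structure to the net.}
Conversely, assume $F^G$ is a set of memoryless winning strategies on $G_\Sigma$ from $q_0$ for $\psi$. Let $F_{\mathit{users}}$ be the translated memoryless strategies on the unfolding. Take any play $(\rho,\delta)$ on $\unf(\Sigma)$ consistent with $F_{\mathit{users}}$. By Proposition~\ref{prop:ppn2cgs}, each $\lambda \in \Lambda(\rho,\delta)$ has a fair companion $\lambda'$ with $\pi(\lambda')=\pi(\lambda)$; such $\lambda'$ lies in $\mathit{out}(q_0,F^G)$ (the moves taken are among those dictated by $F^G$, the idle repetitions being scheduler choices), so it is winning for the users, and by Lemma~\ref{prop: equiv} every $\lambda \in \Lambda(\rho,\delta)$ is winning too. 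Then part~1 (viceversa) of Lemma~\ref{lem:eqwc} gives that $(\rho,\delta)$ is a winning play. Since $(\rho,\delta)$ was arbitrary consistent with $F_{\mathit{users}}$, and at least one such play exists, $F_{\mathit{users}}$ is winning on $\unf(\Sigma)$.

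\textbf{Main obstacle.}
The delicate point is the set-valued versus single-move mismatch in the strategy translation, and more precisely showing that the two notions of \emph{consistency/following} a strategy line up under the $\lambda \leftrightarrow (\rho,\delta)$ correspondence \emph{in both directions}. In the net-to-game direction one must argue that a fair computation following the translated strategies, when folded back, never fires a controllable event outside $f_a(q)$ and never stalls a user whose $f_a(q)\neq\emptyset$ forever (the ``finally postponed'' clause), which requires carefully invoking the scheduler-fairness constraints $\langle k+2,c_j\rangle$. In the game-to-net direction the subtlety is the opposite: a single play $(\rho,\delta)$ corresponds to a \emph{family} $\Lambda(\rho,\delta)$, and one must ensure that the strategy $F^G$ is compatible with \emph{all} fair linearizations, which is where the absence of the $X$ operator (Lemma~\ref{prop: equiv}) and the memoryless hypothesis are essential — memory or a $X$-formula would let the game structure distinguish linearizations that the net cannot. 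I expect writing these consistency verifications cleanly, rather than any conceptual difficulty, to be the bulk of the work.
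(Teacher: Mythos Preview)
Your plan is essentially the paper's own: translate memoryless strategies by picking one transition from the set (net $\to$ $G_\Sigma$) or forming the singleton (G$_\Sigma$ $\to$ net), then use Propositions~\ref{prop:pcgs2pn} and~\ref{prop:ppn2cgs} together with Lemmas~\ref{prop: equiv} and~\ref{lem:eqwc} to carry winning plays across. The consistency checks you flag as the ``main obstacle'' are exactly the ones the paper spells out.

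Two wiring errors to fix, though. In the net-to-game direction you invoke part~2 of Lemma~\ref{lem:eqwc}, but that is the implication $\lambda$ winning $\Rightarrow (\rho,\delta)[\lambda]$ winning, which is the wrong way. What you need is: $(\rho,\delta)[\lambda]$ winning $\Rightarrow \lambda$ winning. This follows from part~1, because by construction $\delta$ in $(\rho,\delta)[\lambda]$ is already a maximal refinement, so $\Lambda((\rho,\delta)[\lambda])$ contains a computation $\lambda^*$ with $\pi(\lambda^*)=\pi(\lambda)$; then Lemma~\ref{prop: equiv} finishes. The paper does exactly this, without naming the lemma.

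In the game-to-net direction you appeal to ``part~1 (viceversa)'' of Lemma~\ref{lem:eqwc} to conclude that if every $\lambda\in\Lambda(\rho,\delta)$ is winning then $(\rho,\delta)$ is winning. That converse is not stated or proved anywhere. The paper sidesteps this by arguing by contradiction: if $(\rho,\delta)$ were \emph{not} winning there would be a single maximal refinement $\delta'$ witnessing the failure, and the corresponding $\lambda$ (made fair via Proposition~\ref{prop:ppn2cgs}, with idle steps inserted only at states where the neglected user's strategy is $\emptyset$ --- which is possible precisely because the play is consistent and no user is finally postponed) lies in $\mathit{out}(q_0,F^G)$ yet is not winning, contradicting the hypothesis. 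You should either adopt this contradiction argument or explicitly prove the converse you need; as written, that step is a gap.
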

%
%ricordarsi: la strategia sulle reti a volte usa q, a volte usa \gamma. Non va bene. Se si sceglie di usare sempre q, ricordarsi di dire da qualche parte che nel caso di strategie senza memoria si può fare anche così
%
\begin{proof}
As first step, we show that if there is a set of memoryless 
winning strategies on $\unf(\Sigma)$ for $\psi$, then there is a set of 
% q_0 e m_0
memoryless winning strategies for the users from $q_0$ 
on $G_\Sigma$.

Let $A$ be the set of users on the net, and  
$F'_A =\{f'_a:Q\rightarrow 2^{T_a}, a\in A\}$ 
a set of memoryless winning strategies, one for each user on $\unf(\Sigma)$, where $Q$ is the set of reachable markings of $\Sigma$.
We define a set of strategies $F_A$ for the users on the 
concurrent game structure in this way:
%
% Ricordiamoci di spiegare questa notazione, che è diversa da quella
% tipica delle strutture giocose
\begin{enumerate}
    \item If $f'_a(q) \neq \emptyset$, and $t$ 
    a transition $t\in f'_a(q)$ arbitrarily chosen, 
    then $f_a(q) = t$.
    \item $f_a(q) = \emptyset$ otherwise.
\end{enumerate} 
We prove that such a strategy is winning in $G_\Sigma$. 

Let \emph{out}$(q_0, F_A)$ be the set of fair computations starting 
from $q_0$ that the users enforce when they follow the 
strategies in $F_A$.
For any $\lambda\in$\emph{out}$(q_0, F_A)$, we show that the play 
$(\rho, \delta)[\lambda]$ satisfies the condition of Definition 
\ref{def:cplay}, and therefore it is consistent with the strategies 
in $F_A'$: (1) let $e\in E_a\cap \rho$ be an event controllable by user $a$. 
By construction, $\delta$ is a maximal refinement, 
hence there must be two cuts $\gamma_j$ and $\gamma_{j+1}$ such 
that $e$ is the only event between them. By construction, if 
$e$ was added to the run after $\gamma_j$, there must be a state 
$q\in \lambda : \mu(\gamma_j) = q$ such that $f_a(q) = \mu(e)$. 
By construction of the strategy, $\mu(e)\in f'_a(q)$.
(2) By contradiction, assume that there is a user $a$ finally 
postponed; then there is a cut $\gamma$ such that for each cut 
$\gamma_j > \gamma$, $f'_a(\mu(\gamma_j))\neq \emptyset$. By construction, 
there must be a state $q_i\in\lambda : \mu(\gamma) = q_i$ such that 
$f_a(q_j) \neq \emptyset$ for each $q_j, j > i$; then $\lambda$ 
cannot be fair with respect to $\langle k+2, c_{a}\rangle$. 

This run is a play, and it is consistent with the 
strategy by construction, hence it is a winning play for the 
users and it respects the property expressed by $\psi$. 
Hence also $\lambda$ satisfies it on the concurrent game structure.

As second step, we want to show that if $F_A$ is a set of memoryless
winning strategies for the users on $G_\Sigma$ 
with initial state $q_0$, 
then there is a set of memoryless winning strategies for the users
on $\unf(\Sigma)$.

Let $F_A = \{ f_a: a \in A \}$ be a set of winning strategies
on $G_\Sigma$.
We construct a family of strategies on $\unf(\Sigma)$
by putting for every 
user $a$ and for every state $q\in Q$, $f'_a(q) = \{f_a(q)\}$.
% e se f_a(q) è l'insieme vuoto?
We want to show that $F'_A = \{ f'_a: a \in A \}$ is a set of
winning strategies for the users.

Let $(\rho, \delta)$ be a play consistent with $F'_A$.
By contradiction, we assume that the formula 
$\psi$ is not satisfied on $\unf(\Sigma)$, 
and that the play $(\rho, \delta)$ testifies it. 
Then there must be a maximal refinement $\delta'$ of $\delta$,
such that $(\rho, \delta')$ does not satisfy the formula. 
$\delta'$ is associated with an infinite sequence of states 
$\lambda$ on $G_\Sigma$. 
If $\lambda$ is a fair computation on $G_\Sigma$, 
then it is also a computation consistent with the strategies: 
every time that there is a user's move, it occurs in 
a state corresponding to a marking in which that 
transition is enabled. An event associated to this transition must be chosen in 
$\unf(\Sigma)$ in a cut corresponding to that marking, 
and therefore it is chosen also by the state corresponding 
to that marking in the concurrent game structure by construction. 
If $\lambda$ is not fair, we know by Proposition \ref{prop:ppn2cgs} 
that we can construct a fair computation $\lambda'$ 
such that $\pi(\lambda) = \pi(\lambda')$. 
In addition, from the proof of Proposition \ref{prop:ppn2cgs} 
we know that the only reason why $\lambda$ can be unfair 
is that there are users that are finally neglected 
during the play. This cannot happen to the users that have 
their strategy finally not-empty, otherwise the play on 
the net would not be consistent with their strategy 
(by definition the strategies select only enabled events).
Then, for every user $a$ neglected by the scheduler, there 
is an infinite number of states in the sequence in which 
$f_a$ selects only the transition that keeps the system 
in the same state. We add a copy of these states in the 
position coming immediately after them. 
This computation $\lambda'$  is fair with respect to 
the user, therefore it is a play consistent with the 
strategy on the concurrent game structure and by hypothesis is winning. 
By construction $\pi(\lambda) = \pi(\lambda')$, hence 
Lemma \ref{prop: equiv} guarantees that also 
$\lambda$ respect the ATL formula.
\qed
\end{proof}
%
% Se gli utenti hanno un insieme di strategie sulla rete, allora hanno anche un insieme di strategie singolari
%
%Visto che i giochi su rete sono equivalenti a dei giochi a turni, vale che se gli utenti non hanno una strategia vincente, allora l'ambiente ne ha una.
% Possiamo allora definire l'operatore G come caso particolare di F. In particolare vale che \langle\langle users\rangle\rangle G\phi = \lnot\langle\langle env\rangle\rangle F\lnot \phi
%
%
We do not know if the equivalence holds also for 
full memory strategies. 
We can prove that in case of winning strategies on the 
distributed net system, there is a strategy on the turn-based 
asynchronous game structure, but we could not prove 
anything for the opposite direction. 
\begin{proposition}
Let $F'_A =\{f'_a:\Gamma\rightarrow 2^{T_a}: a\in \{1,...,k\}\}$ be
a set of winning strategies for the users on $\unf(\Sigma)$ for the 
ATL formula $\langle\langle A\rangle\rangle \eta$. 
Then there is a winning strategy for the users from $q_0$ 
on $G_\Sigma$ for the same formula.
\label{prop:stcpn2cgs}
\end{proposition}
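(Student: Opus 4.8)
The plan is to mimic the first half of the proof of the theorem above, the only new ingredient being that a strategy on $G_\Sigma$ is now allowed to depend on the whole finite history $\lambda_i\in Q^+$, not just on the current state. In that proof memorylessness was used only to read $f'_a$ off a single state; since here $f'_a$ may take different values on two cuts with the same marking, I attach instead to every prefix $\lambda_i$ a B-cut of $\unf(\Sigma)$, and let $f_a$ depend on the prefix through that cut.

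First I would set up this correspondence. Given $\lambda_i=q_0\cdots q_i$, the subsequence of its moves that actually change the state is, by the construction of $G_\Sigma$ in Sect.~\ref{sec:const}, a firing sequence $t_1\cdots t_m$ of $\Sigma$; building the events one at a time (the preset of each new event is forced inside the current cut) this sequence determines a unique run prefix of $\unf(\Sigma)$ together with its terminal B-cut, which I call $\gamma(\lambda_i)$. One records that $\mu(\gamma(\lambda_i))=q_i$, that extending $\lambda_i$ by a stuttering move leaves $\gamma$ unchanged while extending it by a move labelled $t$ replaces $\gamma(\lambda_i)$ by $\gamma(\lambda_i)+e$ for the unique $t$-labelled event $e$ enabled at $\gamma(\lambda_i)$, and hence that $\gamma(\lambda_i)\le\gamma(\lambda_j)$ for $j\ge i$. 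Now I define $F_A$: for a prefix $\lambda_i$ with terminal cut $\gamma=\gamma(\lambda_i)$, if $f'_a(\gamma)\ne\emptyset$ pick, say with least index, a transition $t\in f'_a(\gamma)$ and let $f_a(\lambda_i)$ be the move of $G_\Sigma$ associated with $t$ --- a legal move for $a$ in $q_i$, since $t$ is enabled at $\mu(\gamma)=q_i$ by Definition~\ref{def:strategy}; otherwise put $f_a(\lambda_i)=\emptyset$, the stuttering move.

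Then I would show that $F_A$ is winning. Take any fair $\lambda\in\mathrm{out}(q_0,F_A)$ and form the play $(\rho,\delta)[\lambda]$, which is a play on $\unf(\Sigma)$ by Proposition~\ref{prop:pcgs2pn}; by its construction $\delta$ is a maximal refinement whose cuts are the cuts $\gamma(\lambda_i)$. I claim $(\rho,\delta)[\lambda]$ is consistent with $F'_A$ in the sense of Definition~\ref{def:cplay}. Every $e\in E_a\cap E_\rho$ was added to $\rho$ because at some prefix $\lambda_i$ player $a$ played $\mu(e)=f_a(\lambda_i)$, whence $\mu(e)\in f'_a(\gamma(\lambda_i))$, the cut $\gamma(\lambda_i)$ lies in $\delta$, and $e$ is the only event between two consecutive cuts of $\delta$. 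And no user $a$ is finally postponed: otherwise some $\gamma=\gamma(\lambda_i)\in\delta$ would have $f'_a$ non-empty on every cut $\ge\gamma$ compatible with $\delta$, hence $f_a(\lambda_j)\ne\emptyset$ for all $j\ge i$, while weak fairness with respect to $\langle k+2,c_a\rangle$ makes the scheduler select $a$ at some $j\ge i$; at that step $a$ fires an enabled transition, and firing a transition in a contact-free elementary net system always changes the marking, so $a$ would make a move after $\gamma$ --- a contradiction. Being consistent with the winning set $F'_A$, the play $(\rho,\delta)[\lambda]$ is a winning play for the users and therefore satisfies $\eta$; since $\pi(\lambda)$ visits exactly the markings of the cuts of $\delta$ in order, the same argument as in the proof of Lemma~\ref{lem:eqwc} then shows that $\lambda$ is a winning computation for the users on $G_\Sigma$. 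As $\lambda$ was arbitrary, $F_A$ is winning from $q_0$, so $G_\Sigma,q_0\models\langle\langle A\rangle\rangle\eta$; the non-emptiness of $\mathrm{out}(q_0,F_A)$, if required, follows as in the theorem from Proposition~\ref{prop:ppn2cgs} applied to a play consistent with $F'_A$.

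The step I expect to be the main obstacle is the book-keeping around $\gamma(\lambda_i)$: one must verify that $\lambda_i\mapsto\gamma(\lambda_i)$ is well defined, is compatible with prefix extension, and reproduces exactly the sequence $\delta$ of $(\rho,\delta)[\lambda]$, so that the two conditions of Definition~\ref{def:cplay} can be checked cut by cut. The only other point worth stating explicitly is that resolving the possible non-determinism of $f'_a$ by an arbitrary choice is harmless, because the events that actually occur in $(\rho,\delta)[\lambda]$ are precisely the chosen ones and thus remain authorised by $f'_a$.
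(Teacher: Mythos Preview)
Your proposal is correct and follows essentially the same approach as the paper's own proof. The only cosmetic difference is in the direction of the correspondence between histories and cuts: the paper defines, for each cut $\gamma$, the set $\Lambda(\gamma)$ of all sequentializations leading to it (and then closes under $\pi$-equivalence to handle stuttering), whereas you define directly the map $\lambda_i\mapsto\gamma(\lambda_i)$ and observe it is stuttering-invariant; the resulting strategies $F_A$ coincide, and the verification of Definition~\ref{def:cplay} and the appeal to Lemma~\ref{lem:eqwc} are the same in both arguments.
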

\begin{proof}
Let $\gamma$ be a cut on the unfolding. The set of events that 
occurred from the initial cut to $\gamma$ is uniquely determined. 
We consider all the sequentializations of of markings generated by 
firing the transitions in this set respecting the partial order 
imposed by the unfolding. By construction, 
each of these sequences can be reproduced on $G_\Sigma$ 
starting from $q_0$ on the $\cgs$. 
We denote with $\Lambda(\gamma)$ the set of finite sequences 
obtained in this way. 

%Let $\pi:Q^+ \rightarrow Q^+$ be a function that given a sequence 
%of states $q_0q_1...q_n$ returns the sequence in which there are 
%not pairs of consecutive states $q_i, q_{i+1}$ such that 
%$q_i = q_{i+1}$.
For each player $a\in \{1,...,k\}$ on $G_\Sigma$, we define a 
strategy $f_a$ in this way: 
\begin{enumerate}
    \item If $f'_a(\gamma) \neq \emptyset$, 
    then $f_a(\lambda) = t$ for each $\lambda\in \Lambda(\gamma)$, where $t\in f'_a(\gamma)$.
    % specificare meglio che questo vale per ogni transizione scelta della strategia
    \item For each pair $\lambda, \lambda'$ such that 
    $\pi(\lambda) = \pi(\lambda')$, 
    $f_a(\lambda) = f_a(\lambda')$.
    \item $f_a(\lambda) = \emptyset$ otherwise.
\end{enumerate} 
We prove that such a strategy is winning in $G_\Sigma$. 
Let \emph{out}$(q_0, F_A)$ the set of fair computations starting 
from $q_0$ that the players in $A$ enforce when they follow the 
strategies in $F_A$.
For any $\lambda\in$\emph{out}$(q_0, F_A)$, we consider the play $(\rho, \delta)[\lambda]$ on the unfolding of the distributed net 
system.
We have to show that it is consistent with the 
strategies: 
the first point of Definition \ref{def:cplay} is 
satisfied by construction; 
by contradiction, we assume that the second point is not 
respected. By construction, if there is a user $a$ for 
which $f'_a(\gamma)\neq \emptyset$ for every 
$\gamma > \gamma' : \gamma'\in \rho$, then there is also 
an index $l$ such that for every $m > l$ for every 
prefix $\lambda_m$ of $\lambda$, $f_a(\lambda)\neq \emptyset$. 
If no event belonging to user $a$ occurs in $\lambda$ 
after the prefix $\lambda_l$, then player $a$ has been 
finally neglected by the scheduler, and $\lambda$ cannot 
be a fair computation.
Hence $(\rho, \delta)[\lambda]$ is a consistent play, and 
by hypothesis it is  winning for the 
users. Then by Lemma \ref{lem:eqwc} $\lambda$ is also winning for 
the users on $G_\Sigma$.
\qed\end{proof}
%
% -------------------------------------------------
\section{Related works and conclusions}
\label{sec:last}
In this work we have defined a game for a group of users and an 
environment on the unfolding of distributed net systems,
by adapting the idea of game on the unfolding defined in \cite{ABP19} 
for two players, a sequential user and the environment.

While in \cite{ABP19} the goal of the user was to force the 
occurrence of a target transition,
here we specify the possible goals of the users with the 
temporal logic ATL, initially introduced in \cite{AHK02}
on concurrent game structures.
With respect to \cite{ABP19}, we have also modified some of the
fairness constraints in order to 
put it in relation with the game developed in \cite{AHK02},
and we have shown that in the case of memoryless strategies, 
the users have a winning strategy on the unfolding if, and only if, 
they have a winning strategy on a derived concurrent game structure.

We now discuss related works and some critical issues in our
approach.

After its introduction in \cite{AHK02}, ATL was studied and   
expanded by several authors. In particular, \cite{BPQR15}
deals with ATL with partial observability and a particular set 
of fairness constraints, defined as \emph{unconditional}. 
For this case, a model-checker of ATL is also available 
(\cite{LQR17}). 

Temporal logics are usually model checked on Kripke 
models, but some authors also defined procedures to verify formulas 
directly on Petri nets: \cite{EH00} and \cite{DLX19} provide 
model checking algorithms, respectively for LTL and CTL, based on 
the unfolding of Petri nets; the tool described in \cite{Wolf18a} 
exploits algorithms previously developed in order to 
verify LTL, CTL, and other properties on Petri nets.

Finally, in the last years, other authors defined games on Petri nets 
in order to study reachability or safety properties on partially 
controllable concurrent systems. In the game developed in 
\cite{FO17}, the players are the tokens of the net, and the 
goal is the avoidance of a certain place. In \cite{KLMS20}, the 
goal is the reachability of a target marking, and the moves of 
the players are strictly alternating on the net.
Closer to the game introduced here, in \cite{BKP17} the authors 
present a two-player game for the reachability of a target 
transition developed on the case graph of the net; 
in this game, every player can fire enabled 
transitions without any alternating rule.

\vspace{\baselineskip}

With respect to critical issues in our approach, note that in Def.~\ref{def:cplay} we require that 
each event controllable by a user occurring in a consistent 
play is immediately preceded and followed by a cut; in this way 
we are able to specify the exact moment in which the event 
occurred in the play, even if other events may be concurrent 
with it.
%In addition, item 2 of Def. \ref{def:cplay} 
%implies the existence of a turn for the users, that we do not 
%know when will happen, but such that in that moment, 
%the user will be able  to fire any desired transition, even if 
%that transition was not chosen in any other marking crossed 
%during the play. 

For many goals expressed with ATL this does not create any 
problem, however there are cases in which these constraints 
lead to unrealistic scenarios, as illustrated in Ex. \ref{ex:concpr}. 
\begin{figure}
    \centering
    \includegraphics[width=0.65\textwidth]{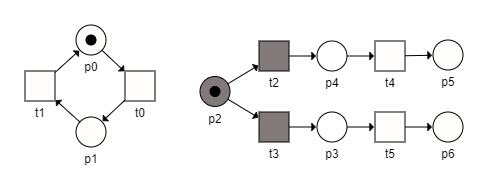}
    \caption{A distributed net with only one user and two concurrent components}
    \label{fig:concpr}
\end{figure}
\begin{example}
\label{ex:concpr}
The distributed net system in Fig.~\ref{fig:concpr} 
has two locations: the environment, represented in white and a 
user $u$, in grey. Let $\psi$ be the ATL formula 
\begin{equation}
    \psi \equiv \langle\langle u \rangle\rangle F((p_0 \wedge p_3) \lor (p_1 \wedge p_4)).
\end{equation}
Consider the strategy $f:Q\rightarrow 2^{T_u}$ defined in this way: 
$f(\{p_0, p_2\}) = \{t_3\}$, $f(\{p_1, p_2\}) = \{t_2\}$. 
In each play consistent with the strategy there must be either 
an occurrence of $t_2$ or an occurrence of $t_3$; if not, $u$ 
would be finally postponed, since, when $p_2$ is marked, his 
strategy is never empty. 
Then the user $u$ has a winning strategy, because 
whenever a transition in $T_u$ fires, the formula will be satisfied. 
This means that there must be a point of the play in which the user is able to 
make his move before the environment. 
In a concurrent system this is unrealistic: 
the two disjoint component of the net system (the left component  formed by 
$t_0$ and $t_1$ and the right formed by $t_2, t_3, t_4$ and $t_5$)
behave independently, and there is no reason to impose 
interleaving between the left component and the transitions of the 
user.
%%%%%%%% QUI un commento sulla differenza tra 
%%%%    i vincoli di fairness in questo articolo 
%%%%    e in quello di Topnoc - vedi anche commento più sopra
% The fact that the user is able to make
% his move before the environment is guaranteed by 
% the fairness constraints imposed by item 2 of 
% Def.\ref{def:cplay}. This will not be possible 
% with the fairness constraints in the approach of Topnoc paper.
% In that framework the user would have not had a winning 
% strategy, see item 3 of Def. 10. 
%%%%%%%%

In addition to that, if we consider the formula:
\begin{equation}
    \langle\langle u \rangle\rangle F(p_0 \wedge p_3),
\end{equation}
the user does not have any winning strategy; the strategy with $f(\{p_0, p_2\}) = \{t_3\}$, 
$f(q) = \emptyset$ for every $q \neq \{p_0, p_2\}$ is not winning, because the move of 
the user is not finally postponed, since in the marking $\{p_1, p_2\}$ the strategy 
is empty, and $t_0t_1$ repeated infinitely often would be a valid execution for a consistent play. Filling the strategy by putting $f(\{p_1, p_2\}) = \{t_3\}$ would not make 
the strategy winning, because, even if $t_3$ must be in the run, the user cannot 
be sure that $\{p_3, p_0\}$ is reached. A valid execution would be $t_0, t_3, t_5, t_1...$, 
that never crosses the desired places together; the user cannot 
impose any order to the transitions of the environment.
\end{example}
Starting from these considerations, we plan to study the 
existence of a strategy directly on structures that explicitly 
represent concurrency, such as the unfolding of a net, 
without forcing sequential executions; 
on these structures, we will also study formulas with the `next' 
operator.
Another goal is to consider partial observability of the users, dropping 
the assumption that the users always know the global state of the 
system. 
This will help to have a more realistic representation of 
interactions on concurrent systems.

Finally, we plan to investigate further whether ATL can be 
successfully used to study properties of concurrent systems, 
and possibly consider an extension of it to define properties 
on the nets.
%
%%%%%  QUI ANDREBBE LA FRASETTA SU Acknoledgment - MIUR
\bibliographystyle{plain}
\bibliography{atl}
\end{document}